\documentclass[14pt, twoside]{article}
 \pdfoutput=1
\usepackage{float}
\usepackage{mathrsfs}
\usepackage{amsfonts}
\usepackage{subfigure}
\usepackage{diagbox}
\usepackage{amsfonts}
\usepackage{amsmath}
\usepackage{amssymb}
\usepackage{multicol}
\usepackage{graphics}
\usepackage{stmaryrd}
\usepackage{cite}


\usepackage{amsmath,amssymb}
\usepackage{multicol}
\usepackage{multicol,graphics}
\usepackage{graphicx}
\usepackage{epstopdf}
\usepackage{epsfig}
\usepackage{pdfpages}
\usepackage{authblk}

\setcounter{MaxMatrixCols}{10}

\newtheorem{theorem}{Theorem}[section]

\newtheorem{remark}[theorem]{Remark}

\numberwithin{equation}{section}
\newenvironment{proof}[1][Proof]{\noindent\textbf{#1.} }{\hfill $\Box$}
\allowdisplaybreaks

\makeatletter\setlength{\textwidth}{15.0cm}
\setlength{\oddsidemargin}{1.0cm}
\setlength{\evensidemargin}{1.0cm} \setlength{\textheight}{21.0cm}
\makeatletter
    
    \newcommand{\Rmnum}[1]{\expandafter\@slowromancap\romannumeral #1@}
    \makeatother
\date{}

\begin{document}
\title{\textbf{Transmission Dynamics of COVID-19 Pandemic Non-pharmaceutical Interventions and Vaccination }~\thanks{%
Supported by NSF of China 11501269 and 11731005.}}
\author[1]{Bin-Guo WANG \thanks{{\tt
E-mail address:wangbinguo@lzu.edu.cn}}$^,$}
\author[2]{Shunxiang HUANG\thanks{{\tt
E-mail address:huangshunxiang@mail.iap.ac.cn}}$^,$\thanks{ Contributed equally to this article}}
\author[1]{Yongping XIONG }
\author[1]{Ming-Zhen XIN }
\author[2]{Jing LI}
\author[1]{Jiangqian ZHANG}
\author[1]{Zhihui Ma}

\affil[1]{\small\it School of Mathematics and Statistics, Lanzhou
University, Lanzhou, Gansu 730000, China}
\affil[2]{\small\it Institute of NBC Defense of PLA, Beijing 102205, China}
\maketitle

\begin{abstract}
Non-pharmaceutical interventions(NPIs) play an important role in the early stage control of  COVID-19 pandemic. Vaccination is considered to be the inevitable course to stop the spread of SARS-CoV-2. Based on the mechanism, a SVEIR COVID-19 model with vaccination and NPIs is proposed. By means of the basic reproduction number $R_{0}$, it is shown that the disease-free equilibrium is globally attractive if $\mathscr{R}_{0}<1$, and COVID-19 is uniform persistence if $\mathscr{R}_{0}>1$. Taking Indian dates for example in the numerical simulation, we find that our dynamical
results fits well with the statistical dates. Consequently, we forecast the spreading trend of COVID-19 pandemic in India. Furthermore, our results imply that improving the intensity of NPIs
 will greatly reduce the number of confirmed cases. Especially, NPIs are indispensable even if all the people were vaccinated when the efficiency of vaccine is relatively low. By simulating the relation ships of the basic reproduction number $\mathscr{R}_{0}$, the vaccination rate and the efficiency of vaccine, we find that it is impossible to achieve the herd immunity without NPIs when the efficiency of vaccine is lower than $76.9\%$. Therefore, the herd immunity area is defined by the evolution of relationships between the vaccination rate and the efficiency of vaccine. In the study of two patchy, we give the conditions for India and China to be open to navigation. Furthermore, an appropriate dispersal of population between India and China is obtained. A discussion completes the paper.

\textbf{Keywords}: COVID-19; SASR; Reproduction numbers; Vaccination; NPIs;

\textbf{AMS Subject Classification (2010)}: 34D20; 37B55; 92D30
\end{abstract}
\section{Introduction}\indent
\noindent

Coronavirus disease 2019 (COVID-19), caused by a novel virus of the coronavirus genus (SARS-CoV-2), has been spreading globally. As of 28 June, 2021, there have been 180,817,269 confirmed cases of COVID-19, including 3,923,238 deaths\cite{WHOrenshu}. The ongoing COVID-19 pandemic
has caused a Once-in-a-Century global crisis\cite{WHOrenshu11}.
Despite scientists worldwide racing to develop antiviral
drugs, curative treatments are unavailable at the time of writing.
the global economy is experiencing the worst plunge in recent
history amid fears of further deterioration of the COVID-19
situation\cite{XiaopanGao}.

Non-pharmaceutical interventions (NPIs) such as quarantine,
isolation, and social distancing play an important role in the control of SARS-CoV-2. Since the outbreak of COVID-19 was
first detected in December 2019 in Wuhan, China\cite{Liq}, many authors have discussed the effects of various measures on the control of COVID-19 pandemic. For example, Tian et al. \cite{HuaiyuTian} suggested that the
Wuhan travel ban or the national emergency
response would have decreased to 744,000
($\pm$ 156,000) confirmed COVID-19 cases outside
Wuhan. Since the airborne transmission by droplets and aerosols is important for the spread of viruses, face masks are a well-established preventive measure. Cheng et al. \cite{chengy} found that most environments and contacts are under conditions of low virus abundance (virus-limited) where surgical masks are effective at preventing virus spread. More advanced masks and other protective equipments are required in potentially virus-rich indoor environments including medical centers and hospitals. Hu et al. \cite{ShixiongHu} based on individual
records of 1178 potential SARS-CoV-2 infectors and their 15,648 contacts in Hunan, China. Their results showed that SARS-CoV-2 susceptibility to infection increases with age, while transmissibility is
not significantly different between age groups and between symptomatic and asymptomatic
individuals. Contacts in households and exposure to first-generation cases are associated
with higher odds of transmission. Considered the infectivity of
individuals with, and susceptibility to, SARS-CoV-2 infection
differs by age, schools were closed in
the early months of the pandemic in most countries\cite{Lancker, Educational}, so that the
low proportion of cases notified in young individuals\cite{Sinha} could be
attributed to a low probability of developing symptoms\cite{Poletti, pollan}, a low
susceptibility to infection\cite{zhangj,jingq,wuj}, and/or few contact opportunities
relative to other age groups. Senapati et al. \cite{Chattopadhyay} investigate that higher intervention
effort is required to control the disease outbreak within a shorter period of time in India.

The transmission mechanism of SARS-CoV-2
makes it more difficult to fight against the disease. As far as the situation is concerned, vaccines are considered to be the most effective defense to control the disease completely. Various deployment strategies were being proposed to increase population immunity levels to SARS-CoV-2. In Saad-Roy et al. \cite{Saad-Roy}, the authors explored three scenarios of selection and found that a one-dose policy may increase the potential for antigenic evolution under certain conditions of partial population immunity. At same time, they highlighted the critical need to test viral loads and quantify immune responses after one vaccine dose, and to ramp up vaccination efforts throughout the world. In consideration of limited initial supply of SARS-CoV-2 vaccine, Bubar et al. \cite{Bubar} used a mathematical model to compare five
age-stratified prioritization strategies.  Following some of the WHO-SAGE recommendations, Acu$\tilde{n}$a-Zegarraa et al. formulated an optimal control problem with mixed constraints to describe vaccination schedules\cite{Zegarraa}.

According to the transmission mechanism of COVID-19, combining NPIs and vaccines, the population is divided into the following categories: susceptible individuals (S), vaccinated individuals(V), exposed individuals (E), infectious individuals (I) and recovered/removed individuals (R). Hence, a SVEIR model for COVID-19 is proposed. The SEIR model and its updates were successful in predicting the SARS epidemic in 2003-2004\cite{Gumel}, the H1N1 influenza pandemic in 2009\cite{Hwang}, and the MERS epidemic in 2012-2015\cite{Chowell}. Recent studies have tried to use the SEIR model to predict the spread of COVID-19 in China. Raed et al. (2020) found that there would be approximately 21,022 (95\% CrI 11,090-33,490) total infections by February 22 in Wuhan \cite{Cummings}, while Wu et al. estimated that there would be 75,815 COVID-19 cases (95\% CrI 37,304-130,330) in Wuhan by January 25, 2020\cite{Leung}.

The basic reproduction number (ratio) $\mathscr{R}_{0}$ is a crucial threshold parameter in the study of disease transmission. In epidemiology, it is defined as the expected number of secondary cases produced by a single (typical) infection in a completely susceptible population and is used to measure the infection potential of an infectious disease
\cite{pvan, diekmann0}. At the beginning of the transmission of coronavirus,
based on likelihood and model analysis, Tang et al. \cite{Tangbiao} revealed that the basic reproduction number may be as high as 6.47, which showed that COVID-19 is highly infectious. By means of the basic reproduction number, Bubar et al. \cite{Bubar} found a highly mitigated spread during vaccine rollout. Riley et al. \cite{Riley} used a model of constant exponential
growth and decay, and quantified this
fall and rise in prevalence in terms of halving
and doubling times and the basic reproduction number. Noting that an important quantity in epidemiological models, the basic reproduction number, Cuevas-Maraver et al.\cite{Maraver} discussed in the realm of the model what consequences different additional intervention measures would have had at the level of deaths and of cumulative infections.

In the face of vaccine dose shortages and logistical challenges, how to deploy the strategies of NPIs and viccination to increase population immunity levels to SARS-CoV-2 have not been fully considered in most of the above published. Mathematical models and field observations show that population
dispersal can exert strong pressure on many infectious diseases\cite{gao, liupeter, wangwendizhao}. Therefore, we choose a spatially discrete environment consisting of $n$ patches, where a patch may represent a country or a city, and population movements between patches. Based on the infection mechanism of SARS-CoV-2, a novel susceptible-asymptomatic-symptomatic-recovered model with NPIs and viccination (SASR) model in a patchy environment is proposed in Section 2. On the basis of the existence of the disease-free equilibrium, the definition and computation formulae of the basic reproduction number $\mathscr{R}_{0}$ are established. Drawing support from the basic reproduction number, the extinction and uniform persistence are shown in Section 3. The numerical simulation results not only consider the intensity of the intervention, the vaccination rate and the efficiency of vaccine but also incorporate the relationship between them and the basic reproducing number so as to get the control strategy of COVID-19 in Section 4. A brief discussion completes the paper.

\section{ SASR Compartmental Model}\label{apsec2}
\noindent

A matrix $M$ is said to be nonnegative if all entries of $M$ are nonnegative. If all off-diagonal entries of $M$ are nonnegative, then we say $M$ is cooperative.

A $n\times n$ matrix $A=(a_{ij})_{n\times n}$ is said to be  irreducible  if for every nonempty, proper subset $I$ of the set $N=\{1,2,\cdots,n\}$, there is an $i\in I$ and $j\in J=N\backslash I $ such that $|a_{ij}|>0$.

Let $S_{i}$ be  the number of
susceptible individuals in patch $i$, $V_{i}$ be  the number of the vaccinated individuals, $E_{1i}$ be  the number of
the exposed individuals who are not contagious in the early stages in patch $i$, $E_{2i}$ be  the number of
the exposed individuals who can infect the susceptible in patch $i$, and $I_{1i}$ be  the number of
the infectious individuals in patch $i$ who are contagious, $I_{2i}$ be  the number of
the infectious individuals in patch $i$ who are not contagious, $R_{i}$ be  the number of
recovery individuals in patch $i$, $N_i$ be the total population in patch $i$, that is, $N_i=S_i+V_i+E_{1i}+E_{2i}+I_{1i}+I_{2i}+R_i$. Furthermore, we suppose that $N_i$ is constant. The population growth process can be described as in Figure \ref{F0}.

\begin{figure}[H]
\centering
\subfigure{
\includegraphics[height=2.0in,width=5.5in]{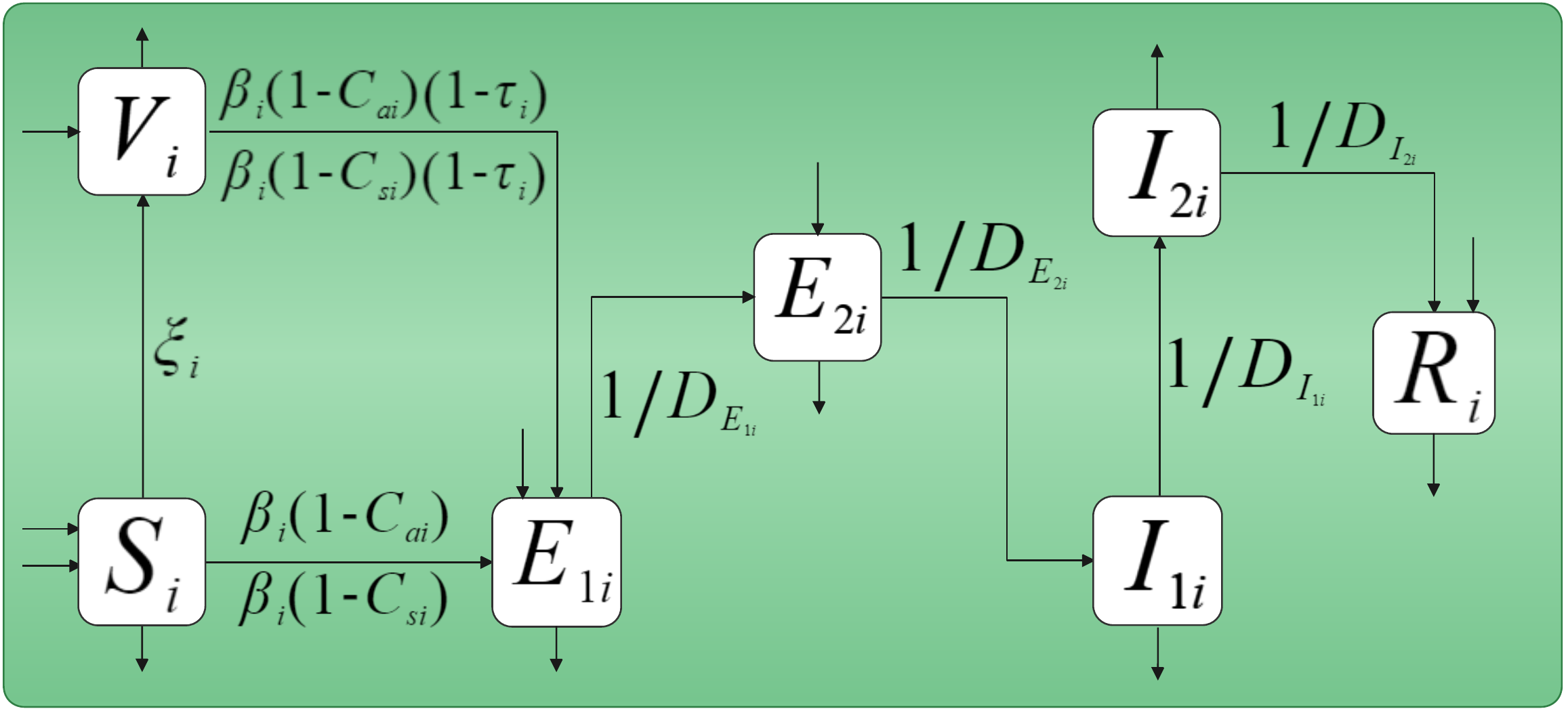}}
\caption{Compartmental diagram of COVID-19 transmission dynamics}
\label{F0}
\end{figure}
The model is given by an autonomous system of ordinary differential equations
\begin{equation}\label{2.14}
\begin{array}{ll}
\frac{dS_{i}}{dt}=\Lambda_{i}-\beta_{i}(1-C_{ai})\frac{S_{i}E_{2i}}{N_{i}}-\beta_{i}(1-C_{si})\frac{S_{i}I_{1i}}{N_i}-\xi_i S_{i}-\mu_{i}S_{i}+
\sum\limits_{j=1}^{n}A_{ij}S_{j}, \\
\frac{dV_{i}}{dt}=\xi_i S_{i}-\beta_{i}(1-C_{ai})(1-\tau_i)\frac{V_{i}E_{2i}}{N_i}-\beta_{i}(1-C_{si})(1-\tau_i)\frac{V_{i}I_{1i}}{N_i}-\mu_{i}V_{i}
+\sum\limits_{j=1}^{n}A_{ij}V_{j} ,\\
\frac{dE_{1i}}{dt}=\beta_{i}(1-C_{ai})\frac{S_{i}E_{2i}}{N_i}+\beta_{i}(1-C_{si})\frac{S_{i}I_{1i}}{N_i}+\beta_{i}
(1-C_{ai})(1-\tau_i)\frac{V_{i}E_{2i}}{N_i}+\beta_{i}(1-C_{si})(1-\tau_i)\frac{V_{i}I_{1i}}{N_i}\\
\ \ \ \ \ \ -\frac{E_{1i}}{D_{E_{1i}}}-\mu_{i}E_{1i}+\sum\limits_{j=1}^{n}B_{ij}E_{1j} ,\\
\frac{dE_{2i}}{dt}=\frac{E_{1i}}{D_{E_{1i}}}-\frac{E_{2i}}{D_{E_{2i}}}-\mu_{i}E_{2i}-d_{i}E_{2i}+
\sum\limits_{j=1}^{n}C_{ij}E_{2j},\\
\frac{dI_{1i}}{dt}=\frac{E_{2i}}{D_{E_{2i}}}-\frac{I_{1i}}{D_{I_{1i}}}-\mu_iI_{1i}-d_{i}I_{1i}, \\
\frac{dI_{2i}}{dt}=\frac{I_{1i}}{D_{I_{1i}}}-\frac{I_{2i}}{D_{I_{2i}}}-\mu_iI_{2i},\\
\frac{dR_{i}}{dt}=\frac{I_{2i}}{D_{I_{2i}}}-\mu_{i}R_{i}+\sum\limits_{j=1}^{n}D_{ij}R_{j}, \\
\end{array}
\end{equation}
where $\Lambda_i$ is the recruitment rate of susceptible class in patch $i$,  $C_{ai}$ and $C_{si}$ denote the intensity of NPIs for incubation with infectiousness and infection with infectiousness individuals in patch $i$, respectively. $\beta_i$ denotes the effective contact rate in patch $i$, $\xi_i$ denotes the vaccination coverage rate in patch $i$,
 $\mu_i$ is the natural
death rate of the population in patch $i$, $D_{E_{1i}}$ and $D_{E_{2i}}$ are lengths of the incubation with non-infectiousness and incubation with infectiousness in patch $i$, respectively. $0\leq\tau_i\leq1$ denotes the vaccine efficacy in patch $i$($\tau_i=1$ represents a vaccine that
offers 100\% protection against infection, $\tau_i= 0$ models a vaccine that offers no
protection at all).
 $D_{I_{1i}}$ and $D_{I_{2i}}$ are lengths of the infection with infectiousness and infection with non-infectiousness in patch $i$, respectively. $d_{i}$ denotes the disease-induced mortality
rate. $A_{ij}$ stand for the dispersal rate form patch $j$ to patch $i$ of the susceptible and the vaccination. $B_{ij}$, $C_{ij}$ and $D_{ij}$ stand for the dispersal rates from patch $j$ to patch $i$ of the incubation with non-infectiousness, incubation with infectiousness, and recovery individuals, respectively. Biologically, we could suppose that
the number of total human population in patch $i$  stabilizes at $N_i>0$.

Moreover, we assume
    \begin{enumerate}
	
	\item[(A1)] $A_{ij}, B_{ij},C_{ij}$ and $D_{ij}$ are nonegative constant,$~\forall 1\leq i\neq j\leq n$, and $[{A_{ij}}]_{n\times n}$, $[{B_{ij}}]_{n\times n}$, $[{C_{ij}}]_{n\times n}$ and $[{D_{ij}}]_{n\times n}$ are irreducible.
	
	\item[(A2)] $\sum\limits_{j=1}^{n}A_{ji}=\sum\limits_{j=1}^{n}B_{ji}=\sum\limits_{j=1}^{n}C_{ji}=\sum\limits_{j=1}^{n}D_{ji}=0
,~\forall i=1,\cdots,n.$

	 \end{enumerate}

Note that (A1) assures that the immigration always occurs between two groups
which are the arbitrary separation of n patches; (A2) means that deaths and births are neglected during the dispersal process.

For simplicity, set $\psi_t(x^0)$ be the solution of (\ref{2.14}) with initial date $\psi_0(x^0)=x^0\in \mathbb{R}_+^{7n}$.  By \cite[Theorem 2.1]{gaoruan}, we have the following.

\begin{theorem}
For any $x^0\in \mathbb{R}_+^{7n}$,  system $(\ref{2.14})$ has a unique nonnegative solution $\psi_t(x^0)$ with initial value $\psi_0(x^0)=x^0$, and all solutions are ultimately bounded and uniformly bounded.
	\end{theorem}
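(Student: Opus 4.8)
The plan is to prove the statement in four steps: (i) local existence and uniqueness from smoothness of the vector field; (ii) positive invariance of $\mathbb{R}_+^{7n}$ via a quasi-positivity (subtangentiality) check; (iii) uniform and ultimate boundedness from a differential inequality for the total population, which crucially uses (A1)–(A2); and (iv) global existence as a consequence of (i) and (iii).

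First, note that the right-hand side of $(\ref{2.14})$ is a rational function of the state variables whose only denominators are the fixed positive constants $N_i$; hence it is $C^\infty$, in particular locally Lipschitz, on a neighbourhood of $\mathbb{R}_+^{7n}$. By the Picard--Lindel\"of theorem, for each $x^0\in\mathbb{R}_+^{7n}$ there is a unique solution $\psi_t(x^0)$ on a maximal interval $[0,T_{\max})$. For nonnegativity I would verify the standard invariance criterion for the positive orthant: whenever one coordinate of the state vanishes while the remaining ones are nonnegative, the corresponding component of the vector field is $\geq 0$. On $\{S_i=0\}$, for example, $\frac{dS_i}{dt}=\Lambda_i+\sum_{j\neq i}A_{ij}S_j\geq 0$, since the diagonal term $A_{ii}S_i$ drops out and $A_{ij}\geq 0$ for $i\neq j$ by (A1); the analogous inequalities on $\{V_i=0\},\{E_{1i}=0\},\{E_{2i}=0\},\{I_{1i}=0\},\{I_{2i}=0\},\{R_i=0\}$ follow from the same cancellation of the (possibly negative) diagonal dispersal coefficient together with nonnegativity of all incidence terms and of the inflow from the preceding compartment. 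Hence $\mathbb{R}_+^{7n}$ is positively invariant and $\psi_t(x^0)\in\mathbb{R}_+^{7n}$ on $[0,T_{\max})$.

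Second, for boundedness I would add all $7n$ equations. Setting $W(t)=\sum_{i=1}^{n}\big(S_i+V_i+E_{1i}+E_{2i}+I_{1i}+I_{2i}+R_i\big)$, assumption (A2) makes each dispersal double sum vanish, e.g. $\sum_{i}\sum_{j}A_{ij}S_j=\sum_{j}\big(\sum_{i}A_{ij}\big)S_j=0$, and similarly for $B_{ij},C_{ij},D_{ij}$; moreover all incidence terms cancel between the $S_i$, $V_i$ equations and the $E_{1i}$ equation, and the transfer terms $E_{1i}/D_{E_{1i}}$, $E_{2i}/D_{E_{2i}}$, $I_{1i}/D_{I_{1i}}$, $I_{2i}/D_{I_{2i}}$ cancel in consecutive pairs. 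What remains is
\[
\frac{dW}{dt}=\sum_{i=1}^{n}\Lambda_i-\sum_{i=1}^{n}\mu_i\big(S_i+V_i+E_{1i}+E_{2i}+I_{1i}+I_{2i}+R_i\big)-\sum_{i=1}^{n}d_i(E_{2i}+I_{1i})\le \Lambda^{*}-\mu_{*}W,
\]
with $\Lambda^{*}=\sum_i\Lambda_i$ and $\mu_{*}=\min_i\mu_i>0$. Gronwall's inequality then yields $W(t)\le W(0)e^{-\mu_{*}t}+\frac{\Lambda^{*}}{\mu_{*}}(1-e^{-\mu_{*}t})$, so $W(t)\le\max\{W(0),\Lambda^{*}/\mu_{*}\}$ for all $t\in[0,T_{\max})$ and $\limsup_{t\to\infty}W(t)\le\Lambda^{*}/\mu_{*}$. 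Since every coordinate is nonnegative and dominated by $W$, all solutions are uniformly bounded and ultimately bounded.

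Finally, since the solution stays in the compact set $\{x\in\mathbb{R}_+^{7n}:\ \|x\|_1\le\max\{W(0),\Lambda^{*}/\mu_{*}\}\}$ throughout $[0,T_{\max})$, it cannot escape to infinity in finite time, so $T_{\max}=\infty$ and $\psi_t(x^0)$ is global. The only point that needs genuine care is the sign checking in the invariance step, namely that the negative diagonal dispersal coefficients never destroy quasi-positivity because each multiplies precisely the vanishing coordinate; the remaining arguments are a routine Gronwall estimate, and the whole assertion also follows directly from \cite[Theorem 2.1]{gaoruan}.
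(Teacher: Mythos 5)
Your proof is correct. It is worth noting that the paper does not actually prove this theorem at all: it simply asserts the result "by \cite[Theorem 2.1]{gaoruan}", so your argument is a self-contained version of the standard reasoning that the citation leaves implicit. The three ingredients you use are exactly the right ones and are carried out correctly: (i) quasi-positivity on the boundary of $\mathbb{R}_+^{7n}$, where the only delicate point --- that the diagonal dispersal coefficients $A_{ii},B_{ii},C_{ii},D_{ii}$ are nonpositive (forced by (A2)) but always multiply the coordinate that is vanishing --- is the one you correctly single out; (ii) the zero-column-sum condition (A2) killing all dispersal terms in the sum $W(t)$, with the incidence and stage-transfer terms cancelling telescopically, leaving $\frac{dW}{dt}\le \Lambda^*-\mu_*W$; and (iii) the comparison/Gronwall bound giving uniform and ultimate boundedness, hence global existence. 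The only cosmetic quibble is that the irreducibility part of (A1) is not needed anywhere in this theorem (only the nonnegativity of the off-diagonal entries is used), so your parenthetical "crucially uses (A1)--(A2)" slightly overstates the role of (A1); otherwise the argument is complete and matches what the cited result delivers.
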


\section{Basic Reproduction Number}
\noindent

In this section, we establish the definition and computation formulae of the basic reproduction number for system $(\ref{2.14})$.

We first consider the disease-free solution of system $(\ref{2.14})$. Let $E_{1i}=E_{2i}=I_{1i}=I_{2i}=0$, $i=1,\cdots,n,$ then we have
	\begin{equation}\label{diseasefree}
\begin{array}{ll}
	\frac{dS_i}{dt}=\Lambda_i-\xi_i S_{i}-\mu_iS_i+\sum\limits_{j=1}^{n}A_{ij}S_j,\\
\frac{dV_{i}}{dt}=\xi_i S_{i}-\mu_{i}V_{i}+\sum\limits_{j=1}^{n}A_{ij}V_{j}.\\
\end{array}
\end{equation}

By the similar arguments to those in \cite{wang_basic_2013}, system (\ref{diseasefree}) has a positive equilibrium $(S^*,V^*)=(S_1^*,S_2^*,\cdots, S_n^*,V_1^*,V_2^*,\cdots, V_n^*)$, which is globally attractive. Linearizing system ($\ref{2.14}$) at the disease-free equilibrium $(S^*,V^*,0,0,0,0,0)$, we get
	\begin{equation}\label{linearequanttion}
	\begin{array}{ll}
	 \frac{dE_{1i}}{dt}=\beta_{i}(1-C_{ai})\frac{S^*_{i}E_{2i}}{N_i}+\beta_{i}(1-C_{si})\frac{S^*_{i}I_{1i}}{N_i}+
\beta_{i}(1-C_{ai})(1-\tau_i)\frac{V^*_{i}E_{2i}}{N_i}+\beta_{i}(1-C_{si})(1-\tau_i)\frac{V^*_{i}I_{1i}}{N_i}\\
~~~~~~~~~~-\frac{E_{1i}}{D_{E_{1i}}}-\mu_{i}E_{1i}+\sum\limits_{j=1}^{n}B_{ij}E_{1j} ,\\
\frac{dE_{2i}}{dt}=\frac{E_{1i}}{D_{E_{1i}}}-\frac{E_{2i}}{D_{E_{2i}}}-\mu_{i}E_{2i}-d_{i}E_{2i}+\sum\limits_{j=1}^{n}C_{ij}E_{2j},\\
\frac{dI_{1i}}{dt}=\frac{E_{2i}}{D_{E_{2i}}}-\frac{I_{1i}}{D_{I_{1i}}}-\mu_iI_{1i}-d_{i}I_{1i}. \\
	\end{array}
	\end{equation}
	Define
	\begin{equation}\nonumber
	 F_1=
	\left(
	\begin{array}{cccc}
	\beta_{1}(1-C_{a1})\frac{S^*_1+(1-\tau_1)V^*_{1}}{N_1} & 0 & \cdots & 0\\
     0 &\beta_{2}(1-C_{a2})\frac{S^*_2+(1-\tau_2)V^*_{2}}{N_2} & \cdots & 0\\
	\vdots & \vdots & \ddots & \vdots \\
	0 & 0 & \cdots & \beta_{n}(1-C_{an})\frac{S^*_n+(1-\tau_n)V^*_{n}}{N_n}
	\end{array}
	\right),
	\end{equation}
	\begin{equation}\nonumber
	F_2=
	\left(
	\begin{array}{cccc}
	\beta_{1}(1-C_{s1})\frac{S^*_1+(1-\tau_1)V^*_{1}}{N_1} & 0 & \cdots & 0\\
     0 &\beta_{2}(1-C_{s2})\frac{S^*_2+(1-\tau_2)V^*_{2}}{N_2}& \cdots & 0\\
	\vdots & \vdots & \ddots & \vdots \\
	0 & 0 & \cdots & \beta_{n}(1-C_{sn})\frac{S^*_n+(1-\tau_n)V^*_{n}}{N_n}
	\end{array}
	\right),
	\end{equation}
	\begin{equation}\nonumber
	 V_1=
	\left(
	\begin{array}{cccc}
	\frac{1}{D_{E_{11}}}+\mu_1-B_{11}& -B_{12} & \cdots & -B_{1n}(t)\\
	-B_{21} & \frac{1}{D_{E_{12}}}+\mu_2-B_{22} & \cdots & -B_{2n}\\
	\vdots & \vdots & \ddots & \vdots \\
	-B_{n1}& -B_{n2} & \cdots & \frac{1}{D_{E_{1n}}}+\mu_n-B_{nn}
	\end{array}
	\right),
	\end{equation}
	
	\begin{equation}\nonumber
	 V_2=
	\left(
	\begin{array}{cccc}
	\frac{1}{D_{E_{21}}}+\mu_1+d_{1}-C_{11} & -C_{12} & \cdots & -C_{1n}\\
	-C_{21} & \frac{1}{D_{E_{22}}}+\mu_2+d_{2}-C_{22} & \cdots & -C_{2n}\\
	\vdots & \vdots & \ddots & \vdots \\
	-C_{n1} & -C_{n2} & \cdots & \frac{1}{D_{E_{2n}}}+\mu_n+d_{n}-C_{nn}
	\end{array}
	\right),
	\end{equation}
	
	\[ V_3=
	\begin{array}{c}
	diag(\frac{1}{D_{I_{11}}}+\mu_1, \frac{1}{D_{I_{12}}}+\mu_2 \cdots, \frac{1}{D_{I_{1n}}}+\mu_n),
	\end{array}
	\]
	\[ V_4=
	\begin{array}{c}
	diag(\frac{1}{D_{E_{11}}}, \frac{1}{D_{E_{12}}}, \cdots, \frac{1}{D_{E_{1n}}})
	\end{array} \]
and
	\[ V_5=
	\begin{array}{c}
	diag(\frac{1}{D_{E_{21}}}, \frac{1}{D_{E_{22}}}, \cdots, \frac{1}{D_{E_{2n}}}).
	\end{array}
	\]
	Let
	\begin{equation}\nonumber
	 Y = \left(
	\begin{array}{ccc}
	0 &  F_1 &  F_2\\ 0 & 0 & 0 \\ 0 & 0 & 0
	\end{array}\right),  Z = \left(
	\begin{array}{ccc}
	 V_1 & 0 & 0\\ - V_4 &  V_2 & 0\\ 0 & - V_5 &  V_3
	\end{array}\right),
	\end{equation}
then (\ref{linearequanttion}) can be written by
	\begin{equation}\label{3.6}
	\frac{du}{dt}=(Y-Z)u.
	\end{equation}

Motivated by the concept of next generation matrices introduced in
\cite{pvan, diekmann0}, we define the basic reproduction number of system $(\ref{2.14})$ as
\begin{equation}\label{jbzsh}
 \mathscr{R}_{0}:=\rho( Y  Z^{-1}),
\end{equation}
where $\rho(A)$ denotes the spectral radius of a matrix $A$.

Let
$s(A)$ denotes the maximum real part of all the eigenvalues of the matrix $A$ (the spectral
abscissa of $A$). By a similar arguments to those in \cite{pvan}, we have the following statements.
\begin{theorem}
$\mathscr{R}_0-1$ has the same sign as $s(Y-Z)$.
\end{theorem}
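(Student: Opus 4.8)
\noindent The plan is to read the statement as the standard threshold correspondence between a next-generation matrix and the linearized infection subsystem, in the spirit of \cite[Theorem~2]{pvan}: once the sign structure of the matrices $Y$ and $Z$ in $(\ref{3.6})$ is pinned down, the conclusion follows from $M$-matrix theory together with Perron--Frobenius arguments. So the first thing I would do is isolate what is actually model-specific here --- namely, that $Y\ge 0$ and that $Z$ is a non-singular $M$-matrix --- and verify it; the remainder is routine.

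That $Y\ge 0$ is immediate from the definitions of $F_1$ and $F_2$. For $Z$: it has nonpositive off-diagonal entries, so proving it is a non-singular $M$-matrix amounts to proving $Z^{-1}\ge 0$, and for this I would exploit the block lower-triangular form of $Z$, which reduces the question to the diagonal blocks $V_1,V_2,V_3$. By (A1), $B_{ij},C_{ij}\ge 0$ for $i\ne j$, so $V_1,V_2$ have nonpositive off-diagonal entries; by (A2), $B_{ii}=-\sum_{j\ne i}B_{ji}\le 0$ and $C_{ii}=-\sum_{j\ne i}C_{ji}\le 0$, so the diagonal entries $\tfrac{1}{D_{E_{1i}}}+\mu_i-B_{ii}$ and $\tfrac{1}{D_{E_{2i}}}+\mu_i+d_i-C_{ii}$ are strictly positive; and summing the $i$-th column and using (A2) once more, the column sums of $V_1$ and $V_2$ equal $\tfrac{1}{D_{E_{1i}}}+\mu_i>0$ and $\tfrac{1}{D_{E_{2i}}}+\mu_i+d_i>0$, respectively. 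Hence $V_1$ and $V_2$ have nonpositive off-diagonal entries and positive column sums, so they are non-singular $M$-matrices, and $V_3$ is a positive diagonal matrix; since the coupling blocks $-V_4$ and $-V_5$ are negative diagonal matrices, $Z$ has nonpositive off-diagonal entries, and a block-triangular matrix of this kind whose diagonal blocks are non-singular $M$-matrices is itself a non-singular $M$-matrix. In particular $Z^{-1}\ge 0$, and $Z+\lambda I$ is a non-singular $M$-matrix for every $\lambda\ge 0$.

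With this in place the correspondence goes as follows. Since $Z^{-1}\ge 0$ and $Y\ge 0$, the matrix $YZ^{-1}$ is nonnegative, so by Perron--Frobenius $\mathscr{R}_0=\rho(YZ^{-1})$ is an eigenvalue of $YZ^{-1}$; also $Y-Z$ has nonnegative off-diagonal entries, hence is cooperative, so $s(Y-Z)$ is a real eigenvalue of $Y-Z$. For $\lambda\ge 0$ the factorization
\[
 Y-Z-\lambda I=-\bigl(I-Y(Z+\lambda I)^{-1}\bigr)(Z+\lambda I),
\]
together with $\det(Z+\lambda I)\ne 0$, shows that $\lambda$ is an eigenvalue of $Y-Z$ exactly when $1$ is an eigenvalue of the nonnegative matrix $Y(Z+\lambda I)^{-1}$, i.e.\ exactly when $h(\lambda):=\rho\bigl(Y(Z+\lambda I)^{-1}\bigr)=1$. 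The function $h$ is continuous and nonincreasing on $[0,\infty)$ --- because $(Z+\lambda I)^{-1}\ge 0$ is entrywise nonincreasing in $\lambda$ --- with $h(\lambda)\to 0$ as $\lambda\to\infty$ and $h(0)=\mathscr{R}_0$; moreover the set $\{\lambda\ge 0:h(\lambda)=1\}$, being contained in the finite eigenvalue set of $Y-Z$ and being an interval by monotonicity of $h$, is empty or a single point. If $\mathscr{R}_0<1$, then $h<1$ throughout $[0,\infty)$, so $Y-Z$ has no nonnegative eigenvalue and $s(Y-Z)<0$. If $\mathscr{R}_0=1$, then $h(0)=1$, so the set above equals $\{0\}$; hence $0$ is an eigenvalue of $Y-Z$ while no positive real number is, whence $s(Y-Z)=0$. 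If $\mathscr{R}_0>1$, the intermediate value theorem gives $\lambda^*>0$ with $h(\lambda^*)=1$, so $s(Y-Z)\ge\lambda^*>0$. In all three cases $\mathscr{R}_0-1$ and $s(Y-Z)$ share the same sign, which is the assertion.

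The step I expect to need the most care is the borderline case $\mathscr{R}_0=1\Leftrightarrow s(Y-Z)=0$ --- that is, excluding the possibility that the dominant eigenvalue of $Y-Z$ is strictly positive while $\mathscr{R}_0$ equals $1$ --- for which the decisive observation is that $\{h=1\}$ cannot be a nondegenerate interval because $Y-Z$ has only finitely many eigenvalues. Everything else is formal: the verification that $Z$ is a non-singular $M$-matrix is short but is the model-specific ingredient, so I would carry it out first, and if one prefers, the whole threshold statement can then be invoked directly from \cite[Theorem~2]{pvan} with $F=Y$ and $V=Z$, the only hypothesis that still needs an independent check in the present setting being precisely that $Z$ is a non-singular $M$-matrix.
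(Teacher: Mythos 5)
Your proof is correct and follows essentially the route the paper itself takes, namely the threshold correspondence of van den Driessche--Watmough \cite{pvan}, which the paper invokes without writing out a proof; your model-specific verification that $Z$ is a non-singular $M$-matrix (via (A1)--(A2), the positive column sums of $V_1,V_2$, and the block lower-triangular structure) is exactly the ingredient that needs independent checking here. The only imprecision is the phrase ``$\lambda$ is an eigenvalue of $Y-Z$ exactly when $h(\lambda)=1$'' --- having $1$ as an eigenvalue of $Y(Z+\lambda I)^{-1}$ only yields $h(\lambda)\ge 1$ --- but since each of your three cases in fact uses the correct one-sided implication together with the monotonicity of $h$, the argument goes through unchanged.
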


The following two theorems give a threshold-type result on the extinction and uniform persistence of COVID-19 in terms of $\mathscr{R}_{0}$.

\begin{theorem}\label{Theorem4.1}
		Assume $(A1)$-$(A2)$ holds and $\mathscr{R}_{0}<1$, then the disease-free equilibrium $\mathcal{E}^*=(S^*,V^*,0,0,0,0,0)$ of system $(\ref{2.14})$ is globally attractive.
\end{theorem}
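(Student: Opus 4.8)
The plan is to follow the by-now standard comparison-and-Lyapunov route for multi-patch SVEIR-type systems, using the structure already set up in the excerpt. First I would note that by Theorem 3.1 (the autonomous analogue of the van den Driessche--Watmough reduction quoted from \cite{pvan}), the hypothesis $\mathscr{R}_0<1$ is equivalent to $s(Y-Z)<0$; since $Z$ is a nonsingular $M$-matrix and $Y\ge 0$ with $Y-Z$ cooperative (off-diagonal entries of $-Z$ are the nonnegative dispersal rates, and $Y$ is nonnegative), the matrix $Y-Z$ is irreducible by (A1), so $s(Y-Z)$ is a simple eigenvalue with a strictly positive eigenvector, and moreover $s(Y-Z)<0$ forces every solution of the linear system $\frac{du}{dt}=(Y-Z)u$ to decay to $0$ exponentially.

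Next I would set up the comparison. Using the global attractivity of $(S^*,V^*)$ for the disease-free subsystem (\ref{diseasefree}), quoted from \cite{wang_basic_2013}, and boundedness from Theorem 2.1, for any $\varepsilon>0$ there is $T>0$ such that $S_i(t)\le S_i^*+\varepsilon$ and $V_i(t)\le V_i^*+\varepsilon$ for $t\ge T$ and all $i$. Feeding these bounds into the equations for $E_{1i},E_{2i},I_{1i}$ in (\ref{2.14}) and dropping the manifestly nonnegative loss terms we did not already account for (and using $S_iE_{2i}/N_i\le (S_i^*+\varepsilon)E_{2i}/N_i$, etc.), the vector $w=(E_1,E_2,I_1)^{\mathrm T}$ satisfies the differential inequality $\frac{dw}{dt}\le (Y_\varepsilon - Z)w$ for $t\ge T$, where $Y_\varepsilon$ is $Y$ with $S_i^*,V_i^*$ replaced by $S_i^*+\varepsilon,V_i^*+\varepsilon$. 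Since $s(Y-Z)<0$ and $s$ depends continuously on the entries, choose $\varepsilon$ small enough that $s(Y_\varepsilon-Z)<0$ as well. Because $Y_\varepsilon-Z$ is cooperative, the comparison principle (Kamke/Müller) applies: $0\le w(t)\le \tilde w(t)$, where $\tilde w$ solves the linear system with matrix $Y_\varepsilon-Z$ and the same data at $t=T$, and $\tilde w(t)\to 0$ exponentially. Hence $E_{1i}(t),E_{2i}(t),I_{1i}(t)\to 0$.

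The remaining components follow by asymptotic autonomy / chain argument: with $I_{1i}\to 0$, the $I_{2i}$ equation $\frac{dI_{2i}}{dt}=\frac{I_{1i}}{D_{I_{1i}}}-\big(\frac{1}{D_{I_{2i}}}+\mu_i\big)I_{2i}$ gives $I_{2i}(t)\to 0$; then the $R_i$ subsystem $\frac{dR_i}{dt}=\frac{I_{2i}}{D_{I_{2i}}}-\mu_iR_i+\sum_j D_{ij}R_j$ has the cooperative, irreducible matrix $[D_{ij}]-\mathrm{diag}(\mu_i)$ with negative spectral abscissa (by (A2) the column sums of $[D_{ij}]$ are $0$, so this is a diagonally dominant $M$-matrix shifted by $-\mu_i$), forced by a term tending to $0$, so $R_i(t)\to 0$. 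Finally, with $E_{1i},E_{2i},I_{1i},R_i,I_{2i}\to 0$, the full $(S,V)$ equations are an asymptotically autonomous perturbation of (\ref{diseasefree}); invoking the theory of asymptotically autonomous systems (Thieme) together with the global attractivity of $(S^*,V^*)$ for the limiting system yields $(S_i(t),V_i(t))\to(S_i^*,V_i^*)$, completing the proof that $\mathcal{E}^*$ is globally attractive.

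The main obstacle I anticipate is making the comparison step fully rigorous: one must verify that $Y_\varepsilon-Z$ is genuinely cooperative and irreducible (so Kamke's theorem and the Perron--Frobenius characterization of $s$ both apply), that the nonlinear incidence terms are dominated as claimed uniformly once $S_i\le S_i^*+\varepsilon$, $V_i\le V_i^*+\varepsilon$ — here one uses nonnegativity of all state variables from Theorem 2.1 so that the discarded terms genuinely have the right sign — and that the perturbation handled in the last step satisfies the hypotheses of the asymptotically-autonomous-systems machinery (in particular that $(S,V)$ stays bounded, again from Theorem 2.1). The eigenvalue-continuity argument used to pass from $s(Y-Z)<0$ to $s(Y_\varepsilon-Z)<0$ is routine but should be stated explicitly.
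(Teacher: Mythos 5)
Your proposal follows essentially the same route as the paper's proof: comparison of $(S,V)$ with the disease-free subsystem (\ref{diseasefree}) to obtain the bounds $S_i\le S_i^*+\varepsilon$, $V_i\le V_i^*+\varepsilon$, then a cooperative linear comparison system for $(E_1,E_2,I_1)$ whose coefficient matrix has negative spectral abscissa for small $\varepsilon$ by continuity of the spectral bound, and finally the chain argument for $I_2,R$ together with the theory of asymptotically autonomous systems for $(S,V)$. The argument is correct, and your explicit attention to cooperativity, irreducibility, and the sign of the discarded loss terms matches (and slightly tightens) what the paper does.
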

\begin{proof}
It is easy to see that $S(t)$ and $V(t)$ satisfy
\begin{equation*}
\begin{split}
	\frac{dS_i}{dt}\leq\Lambda_i-\xi_i S_{i}-\mu_iS_i+\sum\limits_{j=1}^{n}A_{ij}S_j
\end{split}
\end{equation*}
and
\begin{equation*}
\begin{split}
\frac{dV_{i}}{dt}\leq\xi_i S_{i}-\mu_{i}V_{i}+\sum\limits_{j=1}^{n}A_{ij}V_{j},
\end{split}
\end{equation*}
respectively.
Since system (\ref{diseasefree}) has a unique positive constant solution ($S_1^*,\cdots,S_n^*, V_1^*,\cdots,V_n^*$), which is global asymptotically stable,  by the comparsion theorem, for any $x^0\in \mathbb{R}_+^{7n}$ and $ \varepsilon >0$, there exists $t_0>0$ such that \[S_i(t)\leq S_{i}^*+\varepsilon, V_i(t)\leq V_i^*+\varepsilon,   ~~\forall t\geq t_0,\quad i=1,\cdots,n.\]
It then follows that
		\begin{equation*}\label{4.4}
		\begin{array}{ll}
		\frac{dE_{1i}}{dt}&\leq\beta_{i}(1-C_{ai})\frac{(S_{i}^*+\epsilon) E_{2i}}{N_i}+\beta_{i}(1-C_{si})\frac{(S^*_{i}+\epsilon)I_{1i}}{N_i}+\beta_{i}
(1-C_{ai})(1-\tau_i)\frac{(V^*_{i}+\epsilon)E_{2i}}{N_i}\\&~~~+\beta_{i}(1-C_{si})(1-\tau_i)\frac{(V_{i}^*+\epsilon)I_{1i}}{N_i} -\frac{E_{1i}}{D_{E_{1i}}}-\mu_{i}E_{1i}+\sum\limits_{j=1}^{n}B_{ij}E_{1j} ,\\
		 \frac{dE_{2i}}{dt}&\leq\frac{E_{1i}}{D_{E_{1i}}}-\frac{E_{2i}}{D_{E_{2i}}}-\mu_{i}E_{2i}+\sum\limits_{j=1}^{n}C_{ij}E_{2j},\\
		\frac{dI_{1i}}{dt}&\leq\frac{E_{2i}}{D_{E_{2i}}}-\frac{I_{1i}}{D_{I_{1i}}}-\mu_iI_{1i}.\\
		\end{array}
		\end{equation*}
we consider the following system
	\begin{equation}
		\begin{array}{ll}\label{4.5}
		\frac{dE_{1i}}{dt}&=\beta_{i}(1-C_{ai})\frac{(S_{i}^*+\epsilon) E_{2i}}{N_i}+\beta_{i}(1-C_{si})\frac{(S^*_{i}+\epsilon)I_{1i}}{N_i}+\beta_{i}
(1-C_{ai})(1-\tau_i)\frac{(V^*_{i}+\epsilon)E_{2i}}{N_i}\\&~~~+\beta_{i}(1-C_{si})(1-\tau_i)\frac{(V_{i}^*+\epsilon)I_{1i}}{N_i} -\frac{E_{1i}}{D_{E_{1i}}}-\mu_{i}E_{1i}+\sum\limits_{j=1}^{n}B_{ij}E_{1j} ,\\
		 \frac{dE_{2i}}{dt}&=\frac{E_{1i}}{D_{E_{1i}}}-\frac{E_{2i}}{D_{E_{2i}}}-\mu_{i}E_{2i}+\sum\limits_{j=1}^{n}C_{ij}E_{2j},\\
		\frac{dI_{1i}}{dt}&=\frac{E_{2i}}{D_{E_{2i}}}-\frac{I_{1i}}{D_{I_{1i}}}-\mu_iI_{1i}.\\
		\end{array}
		\end{equation}

Let $\lambda_{i}=\beta_{i}(1-C_{ai})(\frac{(N_i^*+2\epsilon)-\tau_i(V^*_{i}+\epsilon)}{N_i})$ and $\eta_{i}=\beta_{i}(1-C_{si})(\frac{(N_i^*+2\epsilon)-\tau_i(V^*_{i}+\epsilon)}{N_i})$, $i=1,2,...,n$.
Denote

	\begin{equation}\nonumber
	 F^{\epsilon}_1=
	\left(
	\begin{array}{cccc}
	\lambda_{1} & 0 & \cdots & 0\\
     0 &\lambda_{2} & \cdots & 0\\
	\vdots & \vdots & \ddots & \vdots \\
	0 & 0 & \cdots & \lambda_{n}
	\end{array}
	\right),
	F^{\epsilon}_2=
	\left(
	\begin{array}{cccc}
	\eta_{1} & 0 & \cdots & 0\\
     0 &\eta_{2}& \cdots & 0\\
	\vdots & \vdots & \ddots & \vdots \\
	0 & 0 & \cdots \eta_{n}
	\end{array}
	\right).
\end{equation}
Let
	\begin{equation}\nonumber
	 Y^{\epsilon} = \left(
	\begin{array}{ccc}
	0 &  F^{\epsilon}_1 &  F^{\epsilon}_2\\ 0 & 0 & 0 \\ 0 & 0 & 0
	\end{array}\right).
	\end{equation}
Since $\mathscr{R}_{0}<1$, then $ s(-Z+Y)<0$.  By the continuity of spectral bound,
there exists a sufficiently small $\epsilon_1>0$ such that $s(-Z+Y^{\epsilon})<0$ for $0<\epsilon<\epsilon_1$, which implies that the trivial solution of the system (\ref{4.5}) is globally asymptotically stable. By the comparison theorem of ordinary differential equation, we deduce that $E_{1i}\to 0, E_{2i}\to 0,I_{1i}\to 0$ as $t\to\infty$, $ \forall i=1,2,\cdots, n.$  It then follows that system $(\ref{diseasefree})$ is the limiting system of $S_i,V_i$ equation in system (\ref{2.14}).
 We also could get that $I_2,R$ equation admit the limiting system
\begin{equation}\label{limiting}
\begin{split}
\frac{dI_{2i}}{dt}&=-\frac{I_{2i}}{D_{I_{2i}}}-\mu_iI_{2i},\\
\frac{dR_{i}}{dt}&=\frac{I_{2i}}{D_{I_{2i}}}-\mu_{i}R_{i}+\sum\limits_{j=1}^{n}D_{ij}R_{j}. \\
\end{split}
\end{equation}
It is easy to see that the solutions in (\ref{limiting}) convergence to $(0,\cdots,0,0,\cdots,0)$. Finally, by the theory of asymptotically autonomous systems (see, e.g. \cite{CCT1995} ), we conclude that the solution of system $(\ref{2.14})$ converges to $(S_1^*,\cdots,S_n^*,V_1^*,\cdots,V_n^*,0,\cdots,0,0,\cdots,0)$. This confirms the global attractivity of $\mathcal{E}^*$ for system $(\ref{2.14})$ under the condition $\mathscr{R}_0<1$, and hence completes the proof.

\end{proof}

	\begin{theorem}\label{theorem4.2}
		If $\mathscr{R}_{0}>1$, then there exists $\tilde{\varepsilon} >0$ such that the solution $(S(t), V(t), E_{1}(t),  E_{2}(t),I_{1}(t),$\\ $I_{2}(t),R(t))$ of system $(\ref{2.14})$ with initial data $x^0$ in $\mathbb{R}_+^{7n}$ and $(E_{1}(0),E_{2}(0),I_{1}(0))>\hat0$ satisfies
		\[\liminf_{t \rightarrow \infty}E_{1i}(t)>\tilde{\varepsilon},\liminf_{t \rightarrow \infty}E_{2i}(t)>\tilde{\varepsilon},\liminf_{t \rightarrow \infty}I_{1i}(t)>\tilde{\varepsilon}, \quad\forall i=1,\cdots,n.\]
	\end{theorem}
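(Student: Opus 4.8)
The plan is to deduce Theorem \ref{theorem4.2} from the general theory of uniform persistence for dissipative autonomous semiflows, using the characterization $\mathscr{R}_0>1 \iff s(Y-Z)>0$ from Theorem 3.2 to turn the disease-free equilibrium into a uniform weak repeller for trajectories that carry infection. I will work with the semiflow $\psi_t$ on $X=\mathbb{R}_+^{7n}$ and split the phase space as
\[
X_0=\{x\in X:(E_1,E_2,I_1)\neq\hat 0\},\qquad \partial X_0=X\setminus X_0 .
\]
By Theorem 2.1 the semiflow is point dissipative, every orbit is ultimately bounded, and (the ODE being autonomous and smooth) $\psi_t$ admits a global attractor. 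The first structural step is to show that $X_0$ is positively invariant: if one of the $3n$ infected coordinates is positive at $t=0$, then the transfer terms $E_{1i}/D_{E_{1i}}$ and $E_{2i}/D_{E_{2i}}$ push positivity down the chain $E_1\to E_2\to I_1$, and the irreducibility of $[B_{ij}]$ and $[C_{ij}]$ in (A1) spreads it across all patches, so $E_{1i}(t),E_{2i}(t),I_{1i}(t)>0$ for all $t>0$ and all $i$ (a standard variation-of-constants / strong-positivity argument for cooperative systems). It follows that the maximal invariant set in $\partial X_0$ is contained in the disease-free set; since system (\ref{diseasefree}) has the single globally attractive equilibrium $(S^*,V^*)$, that invariant set is exactly $\{\mathcal{E}^*\}$, which is automatically isolated and acyclic in $\partial X_0$.

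The main step—and the main obstacle—is to prove that $\mathcal{E}^*$ is a uniform weak repeller for $X_0$: there is $\delta>0$ with $\limsup_{t\to\infty}\|\psi_t(x)-\mathcal{E}^*\|\geq\delta$ for every $x\in X_0$. Suppose not; then for some $x\in X_0$ the orbit stays within $\delta$ of $\mathcal{E}^*$ for all large $t$, so that $S_i(t)\geq S_i^*-\epsilon$, $V_i(t)\geq V_i^*-\epsilon$ (with $\epsilon=\epsilon(\delta)\to 0$ as $\delta\to0$) and $N_i$ is fixed. Since all coefficients in the $(E_1,E_2,I_1)$-subsystem are nonnegative and the subsystem is cooperative, $(E_1,E_2,I_1)(t)$ is bounded below, componentwise and for $t$ large, by the solution of the linear system $\dot u=(-Z+Y^{\epsilon})u$, where $Y^{\epsilon}$ is assembled from the diagonal blocks obtained by replacing $S_i^*+(1-\tau_i)V_i^*$ with its perturbed lower bound—analogous to the $Y^{\epsilon}$ constructed in the proof of Theorem \ref{Theorem4.1}, but with a minus $\epsilon$ instead of a plus. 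Because $s(-Z+Y)>0$, continuity of the spectral bound gives $s(-Z+Y^{\epsilon})>0$ for $\epsilon$ small; the matrix $-Z+Y^{\epsilon}$ is cooperative and, by (A1), irreducible, so it has a strictly positive Perron eigenvector $v$. Choosing $c>0$ small enough that $cv\leq(E_1,E_2,I_1)(t_0)$ (possible since all infected coordinates are already positive by the invariance step) and comparing from below yields $(E_1,E_2,I_1)(t)\geq c\,e^{s(-Z+Y^{\epsilon})(t-t_0)}v\to\infty$, contradicting the orbit's staying near $\mathcal{E}^*$. Hence the weak repeller property holds.

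Finally, assembling point dissipativity, positive invariance of $X_0$, the characterization of the boundary invariant set as the single acyclic isolated equilibrium $\{\mathcal{E}^*\}$, and the uniform weak repeller property, the standard uniform-persistence theorem for dissipative semiflows upgrades weak persistence to strong persistence: there is $\tilde\varepsilon>0$ with $\liminf_{t\to\infty}d(\psi_t(x),\partial X_0)>\tilde\varepsilon$ for all $x\in X_0$. Translating this distance estimate into coordinates—using irreducibility once more so that a uniform lower bound on the infected part of the orbit forces a uniform positive lower bound on each of $E_{1i}$, $E_{2i}$, $I_{1i}$ (no coordinate can stay small while another is bounded away from zero, because the chain-plus-dispersal coupling would feed it)—gives the asserted $\liminf_{t\to\infty}E_{1i}(t)>\tilde\varepsilon$, $\liminf_{t\to\infty}E_{2i}(t)>\tilde\varepsilon$, $\liminf_{t\to\infty}I_{1i}(t)>\tilde\varepsilon$ for all $i=1,\dots,n$. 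The delicate points I expect to spend the most care on are the rigorous justification of the positive invariance of $X_0$ (the strong-positivity statement across patches and along the $E_1\to E_2\to I_1$ chain) and the construction of the comparison matrix $-Z+Y^{\epsilon}$ so that its spectral abscissa is genuinely positive while the linear flow still stays below the true nonlinear dynamics.
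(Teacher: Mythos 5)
Your proposal is correct and follows essentially the same route as the paper: the Thieme-type persistence framework on $X=\mathbb{R}_+^{7n}$, identification of the maximal invariant set on the boundary with the disease-free state, a weak-repeller argument by comparison from below with the linear system $\dot u=(\bar Y^{\epsilon}-Z)u$ whose spectral abscissa stays positive for small $\epsilon$ (using the irreducible cooperative structure and a positive Perron eigenvector to force exponential growth), and finally the acyclicity theorem to upgrade to uniform persistence. The only cosmetic difference is your choice of $X_0$ as the set where $(E_1,E_2,I_1)\neq\hat 0$ rather than the paper's all-coordinates-positive set, which merely shifts the irreducibility/chain-propagation argument from the boundary analysis into the positive-invariance and final coordinate-translation steps.
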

\begin{proof}
Define
$$X=\mathbb{R}_+^{7n},$$
		\[X_0:=\{(S,V, E_1 ,E_2 ,I_1, I_2,R)\in X:E_{1i}>0,E_{2i}>0,I_{1i}>0, 1\leq i\leq n\}\]
		\[\partial X_0:=X \backslash  X_0\]
		Then $X_0$ and $\partial X_0$ are relatively open and closed in $\mathbb{R}^{7n},$ respectively. For any $ x^0\in X_0$, let $\psi_t(x^0)$ be the unique solution of system (\ref{2.14}) with initial data $x^0$.
		It is easy to see that $X_0$ is a positively invariant set.
		According to the arguments in Section 2, the solution of (\ref{2.14}) is ultimately bounded in $X$, which implies that $\psi_t:X\rightarrow X$ is point dissipative on $X$. It follows from \cite[Theorem 3.4.8]{Hale1988} that $\psi_t$ has a global compact attractor $\mathcal{A}$.
		
		Define \[M_{\partial}:=\{x^0\in \partial X_0:\psi_t(x^0)\in \partial X_0,\forall t\geq0\}\]
		and \[\mathcal{M}:=\{x^0\in X:x^0_{1i}=S_i^*,x^0_{2i}=V_i^*,x^0_{3i}= x^0_{4i}=x^0_{5i}=0,\forall i=1,\cdots,n\}.\]

We now show that \[M_{\partial}=\mathcal{M}.\]

For any $x^0\in \mathcal{M}$, the solution $\psi_t(x^0)$ satisfies $E_{1i}(t,x^0)=0,E_{2i}(t,x^0)=0,I_{1i}(t,x^0)=0,i=1,\cdots, n$ for all $t\geq 0$. Hence, $x^0\in M_{\partial}$ and $\mathcal{M}\subset M_{\partial}$.
		
		For any $x^0\in\partial X_0\backslash \mathcal{M}$, there is a $i^*$ such that $(x^0_{3i^*},x^0_{4i^*},x^0_{5i^*})=(E_{1i^*}(0),E_{2i^*}(0),I_{1i^*}(0))>(0,0,0)$.

\textbf{Case 1}
Let $E_{1i^*}(0)>0$. Since $ {B}_{ij}$ is cooperative,  the third equation of system (\ref{2.14}) satisfies
\[\frac{dE_{1i^*}}{dt}\geq-\frac{E_{1i^*}}{D_{E_{1i^*}}}-\mu_{i}E_{1i^*}+{B}_{i^*i^*}E_{1i^*}.\]
Furthermore, the fact that the matrix ${B}_{ij}$ is irreducible implies that there exists a $t_0>0$ such that $E_{1i}(t)>0$ for all $i=1,\cdots,n$ and $t\geq t_0$.

From the third  equation of system (\ref{2.14}), we can get $E_{2i}(t)>0$ $\forall i=1,\cdots,n$, $t\geq t_0+1$. Then, from the fourth  equation of system (\ref{2.14}), we deduct that $I_{1i}(t)>0$ for all $i=1,\cdots,n$ and $t\geq t_0+2$.

\textbf{Case 2}
Let $E_{2i^*}(0)>0$. By the fourth equation of system (\ref{2.14}), we have \[\frac{dE_{2i^*}}{dt}\geq-\frac{E_{2i^*}}{D_{E_{2i^*}}}-\mu_{i}E_{2i^*}+{C}_{i^*i^*}E_{2i^*},\]
which is deduced from the fact that $C_{ij}$ is cooperative. Thus, we can get $E_{2i^*}(t)>0,\forall t>0$. Now, the third equation satisfies
$$\frac{dE_{1i^*}}{dt}\geq \beta_{i^*}(1-C_{ai^*})\frac{S_{i^*}E_{2i^*}}{N_{i^*}}-\frac{E_{1i^*}}{D_{E_{1i^*}}}-\mu_{i}E_{1i^*}+{B}_{i^*i^*}E_{1i^*}.$$
It is easy to see that $E_{1i^*}(t)>0$ for $t>1$. By the arguments in \textbf{Case 1}, we can obtain that $(E_{1}(t),E_{2}(t),I_{1}(t))\gg (0,\cdots,0,0\cdots,0,0,\cdots,0)$ for all $t>t_0+3$.

\textbf{Case 3}
Let $I_{1i^*}(0)>0$. By the fifth equation of system (\ref{2.14}), we have \[\frac{dI_{1i^*}}{dt}\geq-\frac{I_{1i^*}}{D_{I_{1i^*}}}-\mu_iI_{1i^*}.\]

Hence, we can get $I_{1i^*}(t)>0,\forall t>0$. Now, the second equation of system (\ref{2.14}) satisfies
$$\frac{dE_{1i^*}}{dt}\geq\beta_{i^*}(1-C_{si^*})\frac{S_{i^*}I_{1i^*}}{N_{i^*}}-\frac{E_{1i^*}}{D_{E_{1i^*}}}-\mu_{i}E_{1i^*}+{B}_{i^*i^*}E_{1i^*}.$$
It is easy to see that $E_{1i^*}(t)>0$ for $t>1$. By the arguments in \textbf{Case 1}, we can obtain that $(E_{1}(t),E_{2}(t),I_{1}(t))\gg (0,\cdots,0,0\cdots,0,0,\cdots,0)$ for all $t>t_0+3$.

	Then $M_{\partial}\subset \mathcal{M}.$ Hence, $M_{\partial}=\mathcal{M}$.

We claim that $ W^s(\mathcal{M})\cap X_0= \emptyset$, where $W^s(\mathcal{M})$ is the stable manifold of $\mathcal{M}$.
Let $\bar{\lambda}_{i}=\beta_{i}(1-C_{ai})(\frac{(N_i^*-2\epsilon)-\tau_i(V^*_{i}-\epsilon)}{N_i})$ and $\bar{\eta}_{i}=\beta_{i}(1-C_{si})(\frac{(N_i^*-2\epsilon)-\tau_i(V^*_{i}-\epsilon)}{N_i})$, $\forall i=1,2,...,n$.
Denote

	\begin{equation}\nonumber
	 \bar{F}^{\epsilon}_1=
	\left(
	\begin{array}{cccc}
	\bar{\lambda}_{1} & 0 & \cdots & 0\\
     0 &\bar{\lambda}_{2} & \cdots & 0\\
	\vdots & \vdots & \ddots & \vdots \\
	0 & 0 & \cdots & \bar{\lambda}_{n}
	\end{array}
	\right),
	\bar{F}^{\epsilon}_2=
	\left(
	\begin{array}{cccc}
	\bar{\eta}_{1} & 0 & \cdots & 0\\
     0 &\bar{\eta}_{2}& \cdots & 0\\
	\vdots & \vdots & \ddots & \vdots \\
	0 & 0 & \bar{\eta}_{n}
	\end{array}
	\right).
\end{equation}
Let
	\begin{equation}\nonumber
	 \bar{Y}^{\epsilon} = \left(
	\begin{array}{ccc}
	0 &  F^{\epsilon}_1 &  F^{\epsilon}_2\\ 0 & 0 & 0 \\ 0 & 0 & 0
	\end{array}\right).
	\end{equation}

Since $\mathscr{R}_{0}>1$, then $ s(Y-Z)>0$.  By the continuity of spectral bound,
there exists a sufficiently small $\epsilon_1>0$ such that $s(\bar{Y}_{\epsilon}-Z)>0$ for $0<\epsilon\leq\epsilon_1$.

		$\textbf{Claim}.$ If $x^0\in X_0,$ then \[\limsup_{t \rightarrow \infty}d(\psi_t(x^0),\mathcal{M})\geq \epsilon_1\]

		On the contrary, we assume that there exists $\bar{x}^0\in X_0$ such that $\limsup\limits_{t \rightarrow \infty}d(\psi_t(\bar{x}^0),\mathcal{M})<\varepsilon_1$.
		It then follows that there exists $t_0>0$ such that
		 \[S_i^*-{\varepsilon}_1<S_i(t)<S_i^*+{\varepsilon}_1,V_i^*-{\varepsilon}_1<V_i(t)<V_i^*+{\varepsilon}_1\]
		for all $t\geq t_0 $ and $i=1,\cdots,n$.
			Hence, we have
\begin{equation}\label{4.8}
\begin{array}{ll}
\frac{dE_{1i}}{dt}&\geq\beta_{i}(1-C_{ai})\frac{(S_{i}^*-\varepsilon_1)E_{2i}}{N_i}+\beta_{i}(1-C_{si})\frac{(S^*_{i}-\varepsilon_1)I_{1i}}{N_i}+\beta_{i}
(1-C_{ai})(1-\tau_i)\frac{(V^*_{i}-\varepsilon_1)E_{2i}}{N_i}\\&~~~+\beta_{i}(1-C_{si})(1-\tau_i)\frac{(V^*_{i}-\varepsilon_1)I_{1i}}{N_i}-\frac{E_{1i}}{D_{E_{1i}}}-\mu_{i}E_{1i}+\sum\limits_{j=1}^{n}B_{ij}E_{1j} ,\\
\frac{dE_{2i}}{dt}&\geq\frac{E_{1i}}{D_{E_{1i}}}-\frac{E_{2i}}{D_{E_{2i}}}-\mu_{i}E_{2i}+\sum\limits_{j=1}^{n}{C}_{ij}E_{2j},\\
\frac{dI_{1i}}{dt}&\geq\frac{E_{2i}}{D_{E_{2i}}}-\frac{I_{1i}}{D_{I_{1i}}}-\mu_iI_{1i}.\\
\end{array}
\end{equation}
Since $-Z+\bar{Y}_{\epsilon}$ is irreducible and essentially nonnegative, it has a positive eigenvector associated with $s(-Z+\bar{Y}_{\epsilon})>0$.		
	By the comparison theorem of ordinary differential equations, we have  $\lim\limits_{t\to \infty }E_{1i}(t)=\infty,$$\lim\limits_{t\to \infty }E_{2i}(t)=\infty,$ $\lim\limits_{t\to \infty }I_{1i}(t)=\infty,$ a contradiction. The claim is proved.

		The set $M_{\partial}=\mathcal{M}$ is  an isolated invariant set and acyclic. By \cite[Theorem 4.6]{THR1993}, we conclude that system (\ref{2.14}) is uniformly persistent in $X_0$ whenever
$\mathscr{R}_{0}>1$. That is,  there is a $\tilde{\varepsilon}>0$ such  that \[\liminf_{t \rightarrow \infty}E_{1i}(t)>\tilde{\varepsilon},\liminf_{t \rightarrow \infty}E_{2i}(t)>\tilde{\varepsilon},\liminf_{t \rightarrow \infty}I_{1i}(t)>\tilde{\varepsilon}, \forall i=1,\cdots,n.\]
	This completes the proof.
\end{proof}

\begin{remark}\label{re}
\rm System $(\ref{2.14})$ considers the dynamics of COVID-19 model with NPIs and viccination. If we take $\tau_i=0$ in the above discussion, then system $(\ref{2.14})$ implies that NPIs for incubation with infectiousness and infection with infectiousness individuals is the only measure. Similarly, let $C_{si}=0$
and $C_{ai}=0$ in the above discussion, system $(\ref{2.14})$ implies that the vaccination is considered only.
\end{remark}

\section {Numerical simulation }
\noindent

By the actual dates showed in \cite{huangshunx}, let $\beta_i=0.7$, $D_{E_{1i}}=2.9$ Day, $D_{E_{2i}}=2.3$ Day, $D_{I_{1i}}=2.9$ Day, $D_{I_{2i}}=12$ Day, $\forall i=1,2,...,n$. In the case of one patchy, we take the data of India to estimate the roles of NPIs and viccination in the prevention and controlling of COVID-19. In the case of two patchy, we give the conditions for India and China to be open to navigation. Furthermore, an appropriate dispersal of population between India and China is obtained.

\subsection{The case of one patchy}
\noindent

Let $n=1$ in system (\ref{2.14}), it then follows from (\ref{jbzsh}) that the basic reproduction number is
\begin{equation}\label{jibenzaishshu}
\mathscr{R}_{0}=\frac{\beta_1\Lambda_1(\mu_1+\xi_1(1-\tau_1))(D_{E_{21}}(\mu_1+\frac{1}
{D_{I_{11}}})(1-C_{a1})+1-C_{s1})}{\mu_1N_1D_{E_{11}}D_{E_{21}}(\xi_1+\mu_1)(\mu_1+\frac{1}
{D_{E_{11}}})(\mu_1+\frac{1}{D_{I_{11}}})(d_1+\mu_1+\frac{1}{D_{E_{21}}})}.
\end{equation}

We estimate $d_1=0.00013$ {Day}$^{-1}$. According to the dates in \cite{population, age}, we take $N_1=1380004000$ $\mu_1=0.00004$ {Day}$^{-1}$ and $\Lambda_1=65786$ {People}/{Day}. It follows from the references \cite{huirui, moderna, janssen} that the vaccine efficacy of Pfilzer-BioNTech COVID-19 Vaccine, Moderna COVID-19 Vaccine and Janssen COVID-19 Vaccine are 95\%, 94\% and 66\%, respectively. Hence, we get the average efficacy of vaccine is 85\%. The following is the cumulative and active cases of COVID-19 in India from April 1, 2021 to June 28, 2021\cite{actualdata}. Take the average efficiency of vaccine $\tau_1=0.85$ and $\xi_1=0.05$ which is the date in Indian on April 1\cite{actualdata}. Considering the existing prevention and control intensity, we assume that $C_{a1}=C_{s1}=0.2$. According to the data of cumulative and active cases on April 1 in India, taking the initial value is $(S(0),V(0),E_1(0),E_2(0),I_1(0),I_2(0),R(0))=(1297600506,70104203,129660,102834,115020,475941,11475836)$. Thus, the numerical simulation results are shown in Figure \ref{F1}.

\begin{figure}[H]
\centering
\subfigure{
\includegraphics[height=2.4in,width=4.8in]{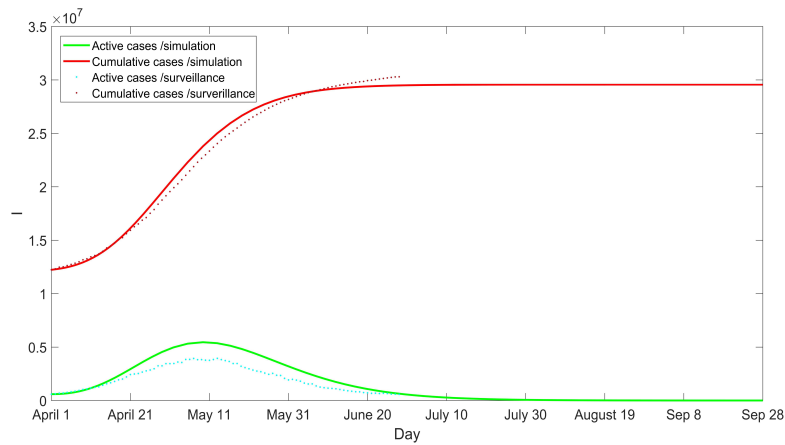}}
\caption{Numerical simulation of active and cumulative cases}
\label{F1}
\end{figure}

The discrete points represent the real data of active and cumulative cases from April 1 to June 28, 2021 in India. We found that the dynamical
results fit well with the statistical data(see Figure \ref{F1}). If the existing protection intensity and vaccine injection schedule are maintained, the numerical simulation forecasts the trend of COVID-19 In India and the active cases will reach 100000 in July 25, 10000 in  August 25 and 1000 in September 19.
\begin{figure}[H]
\centering
\subfigure{
\includegraphics[height=2.78in,width=4.5in]{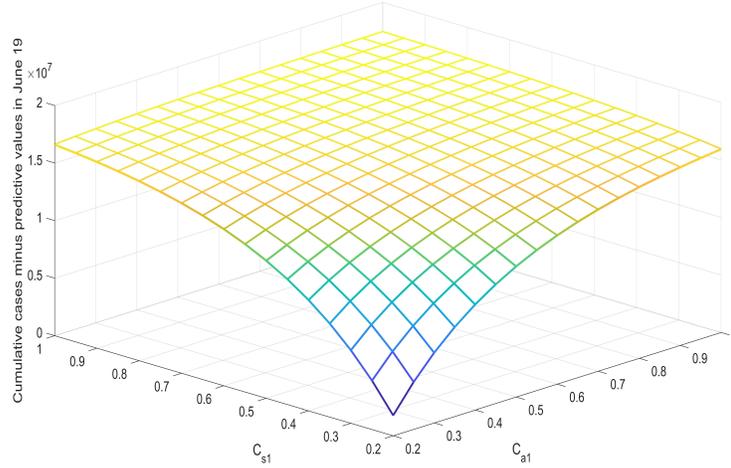}}
\caption{The decrement of numbers of infectious with the increase of $C_{s1}$ and $C_{a1}$.}
\label{F2}
\end{figure}

In the following, we investigate the role of NPIs in the controlling of COVID-19.
Keeping other parameters unchanged,
Figure \ref{F2} shows that $C_{s1}$ and $C_{a1}$ can effectively reduce the number of cumulative cases. Let $P$ be the decrement of numbers of infectious with the increase of $C_{s1}$ and $C_{a1}$. More intuitively, we have listed the specific numbers under the different intensity of NPIs(see Table 1 ).

\begin{table}[H]\label{shaojian}
\centering
\caption{The relationship between $P$ and $C_{s1}$ and $C_{a1}$}
\begin{tabular}{|l|c|c|c|c|c|c|c|c|c}
\hline
\diagbox{$C_{s1}$}{$P$}{$C_{a1}$} & 0.2 & 0.25 & 0.3 & 0.35 & 0.4 & 0.45 & 0.5 \\
\hline
0.2	&	1769611 	&	4115444 	&	6091986 	&	7757921 	&	9162980 	&	10349134 	&	 11351706 	 	 \\
\hline																	
0.25	&	4246751 	&	6219307 	&	7879821 	&	9278475 	&	10457744 	&	11453237 	&	 12294780 	 	\\
\hline																	
0.3	&	6346002 	&	8001037 	&	9393367 	&	10565718 	&	11554108 	&	12388618 	&	 13094313 	 \\
\hline																	
0.35	&	8121646 	&	9507596 	&	10673095 	&	11654365 	&	12481781 	&	13180634 	&	 13771950 	\\
\hline																	
0.4	&	9621232 	&	10779821 	&	11754026 	&	12574353 	&	13266301 	&	13851057 	&	 14346209 	 \\
\hline																	
0.45	&	10885953 	&	11853052 	&	12666328 	&	13351389 	&	13929545 	&	14418491 	&	 14832893 	\\
\hline																	
0.5	&	11951462 	&	12757688 	&	13435882 	&	14007466 	&	14490192 	&	14898792 	&	 15245468 	 \\
\hline																	
																	
\end{tabular}
\end{table}

It then follows from the dates in Table 1 if $C_{s1}$ remains unchanged and $C_{a1}$ is raised from 0.2 to 0.3, then there will be 6091986 fewer infected individual. Supposing that $C_{a1}$ remains unchanged and $C_{s1}$ is raised from 0.2 to 0.3, then there will be 6346002 fewer infected individual. If both $C_{a1}$ and $C_{s1}$ are increased to 0.3, then there will be 9393367 fewer infected individual. According to  the disease-induced mortality rate in Indian,  if we take $C_{s1}=0.2$, $C_{a1}=0.3$, then 62566 people are saved; If $C_{s1}=0.3$, $C_{a1}=0.2$, then 65174 people are saved; If $C_{s1}=0.3$ and $C_{a1}=0.3$, then 96471 people are saved.

 At present, the efficiency of vaccine is not relatively low. In the following, we assume that all people are vaccined Janssen COVID-19 Vaccine and consider the relationship between $\mathscr{R}_0$ and $C_{s1}$, $C_{a1}$(see Figure \ref{F9123}).

\begin{figure}[H]
\centering
\subfigure{
\includegraphics[height=2.78in,width=4.5in]{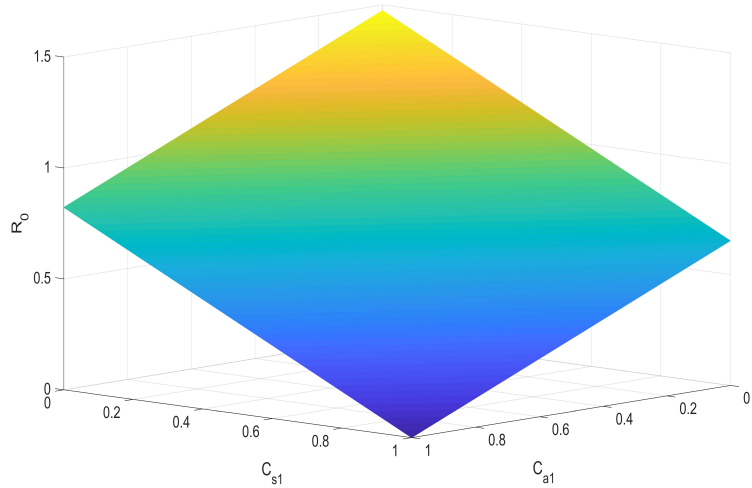}}
\caption{When $\tau_1=0.66$ and $\xi_1=1$, the image in three dimensions of relationship among $\mathscr{R}_0$, $C_{s1}$ and $C_{a1}$.}
\label{F9123}
\end{figure}

It can be seen from the above discussion that NPIs play a very significant role for the disease control.
Figure \ref{F3} is the projection of Figure \ref{F9123} on the $C_{s1}\times C_{a1}$ plane. The red and green area boundary line in Figure \ref{F3} represents $\mathscr{R}_{0}=1$. Figure \ref{F3} shows if $(C_{a1},C_{s1})$ belongs to the green area, $\mathscr{R}_{0}<1$ while $\mathscr{R}_{0}>1$ in the red area.
Our numerical results shows that NPIs are indispensable even if all the people were vaccinated when the efficiency of vaccine is relatively low. In other words, in order to control the spread of the disease, NPIs must be strengthened to make $C_{s1}$ and $C_{a1}$ in the green area $A$ even if each people is vaccinated. In particular, we suggest that NPIs should be strengthened, not weakened in India.

\begin{figure}[H]
\centering
\subfigure{
\includegraphics[height=2.4in,width=4.8in]{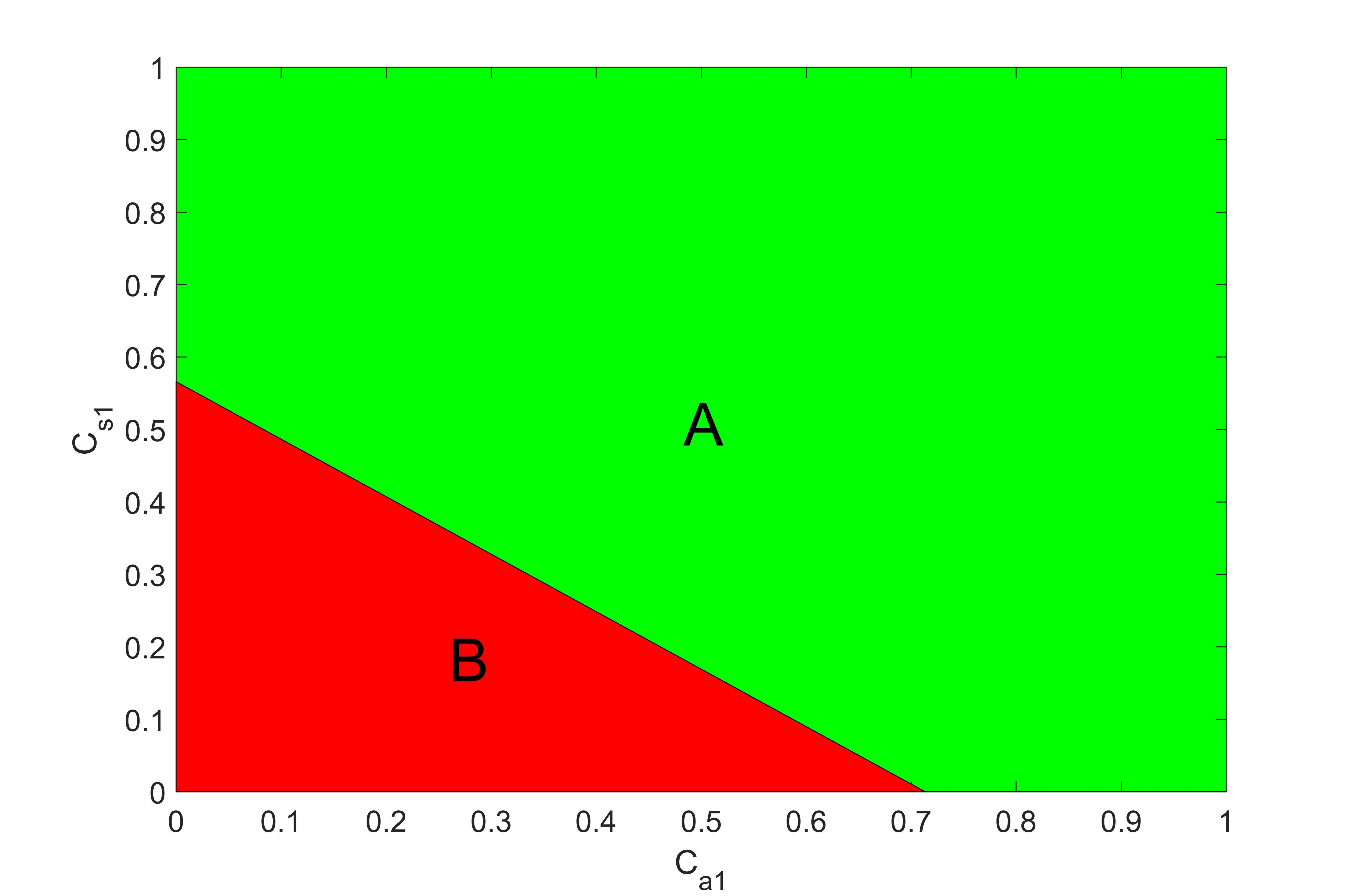}}
\caption{When $\tau_1=0.66$ and $\xi_1=1$, the relationship between $C_{s1}$ and $C_{a1}$.}
\label{F3}
\end{figure}

The herd immunity is our ultimate goal. In the following, we study the role of the vaccine in the absence of NPIs.

\begin{figure}[H]
\centering
\subfigure{
\includegraphics[height=2.4in,width=4.5in]{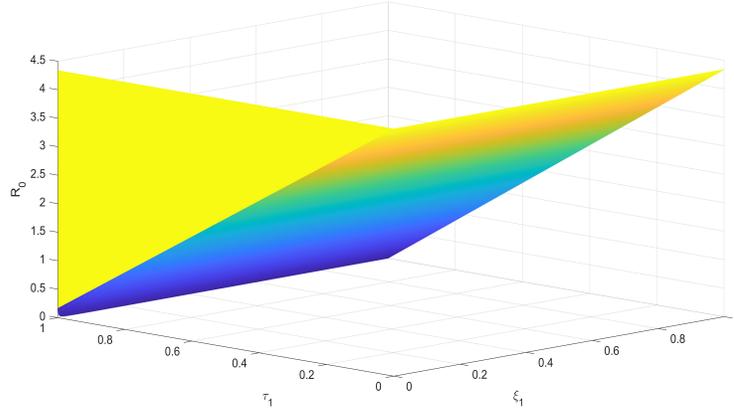}}
\caption{When $C_{s1}=0$ and $C_{a1}=0$, the relationship between $\xi_1$ and $\tau_1$ and $R_0$.}
\label{F4}
\end{figure}
It then follows from Figure \ref{F4} that $\mathscr{R}_0$ decreases with the improvement of $\xi_1$, and finally $\mathscr{R}_0$ is less than 1 when $\tau_1=0.95$. $\mathscr{R}_0$ is decreases and bigger than 1 even if $\xi_1=1$ when $\tau_1=0.66$.  Let $\mathscr{R}_0=1$, $C_{a1}=C_{s1}=0$ and $\xi_1=1$, it then follows from  (\ref{jibenzaishshu}) that $\tau_1=0.769$. In other words, $\mathscr{R}_{0}>1$ always holds when $\tau_1<0.769$. Hence, we have gotten a minimum standard of the efficiency of vaccine.

In the face of vaccine dose shortages and logistical challenges, it's impossible for all people to be vaccinated in a shorten time. It is easy see that $\tau_1\xi_1$ indicates the proportion of antibody produced after vaccination. Let $C_{a1}=C_{s1}=\xi_1=\tau_1=0$, which implies that COVID-19 transmits without NPIs and vaccines. It then follows from (\ref{jibenzaishshu}) that $\mathscr{R}_{0}=4.3356$. Furthermore, we conclude if $1-\frac{1}{\mathscr{R}_0}<\tau_1\xi_1$ is satisfied(\cite{Herd}), then the herd immunity is formed.
\begin{figure}[H]
\centering
\subfigure{
\includegraphics[height=2.78in,width=4.5in]{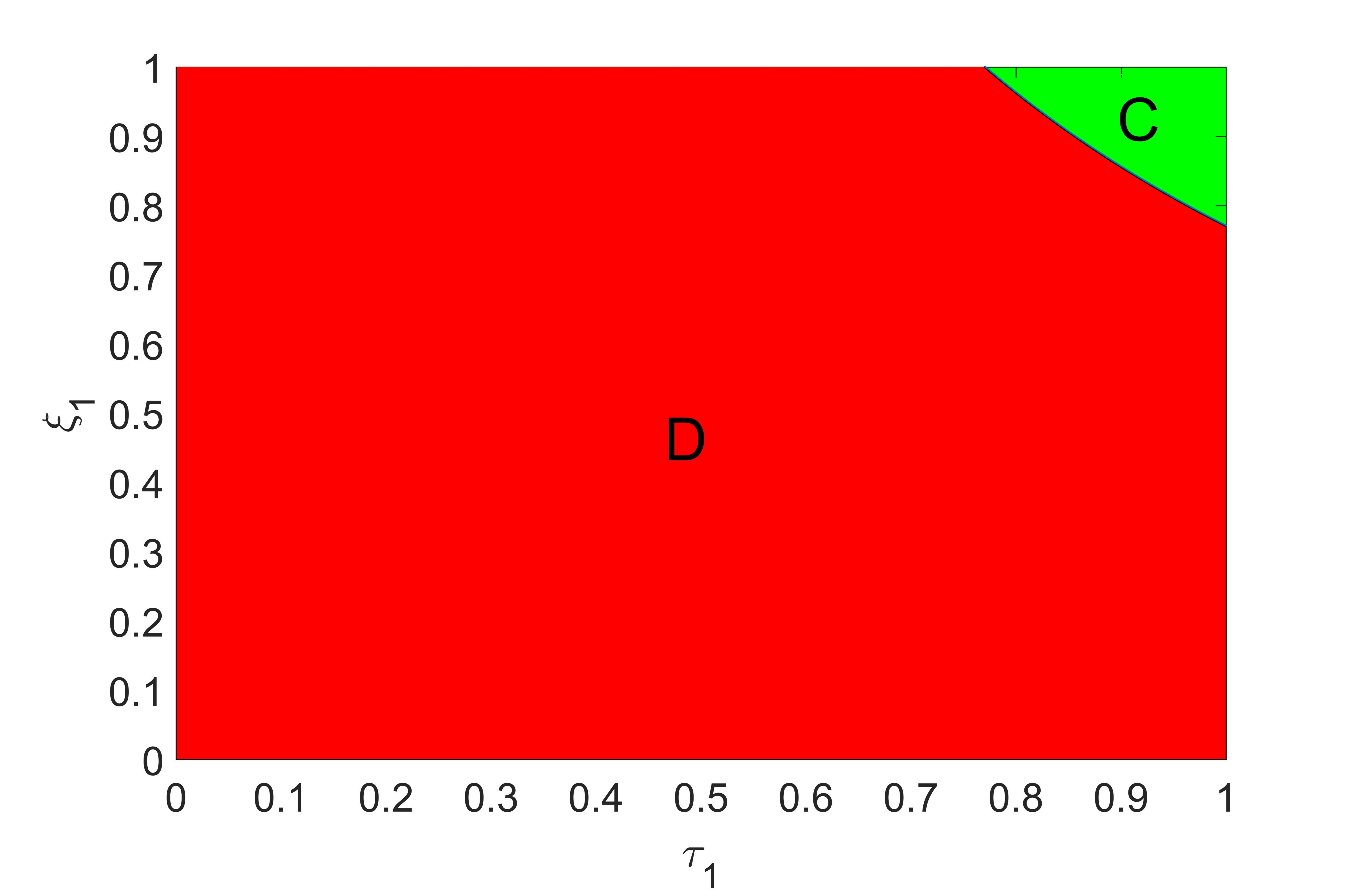}}
\caption{When $C_{a1}=0$ and $C_{s1}=0$, the relationship between herd immunity and $\tau_1$ and $\xi_1$}
\label{F6}
\end{figure}
From Figure \ref{F6}, it reveals that $\xi_1=0.769$ when $\tau_1=1$, and $\tau_1=0.769$ if $\xi_1=1$. The intersection of the area $C$ and $D$ is called the herd immunity line which satisfies $\tau_1\xi_1=0.769$ and $C$ is the herd immunity area where the condition $\tau_1\xi_1>0.769$ is satisfied.

\subsection{The case of two patches}
\noindent

In this subsection, we take India and China for example and investigate the conditions for two countries to be open to navigation. Let $n=1$ in system (\ref{2.14}) to denote the case of India, while the case of China corresponds to $n = 2$.
For simplicity, let
\begin{equation}\nonumber
\begin{split}
G=\left(\begin{array}{cc}
   G_{11} & G_{12}\\
     G_{21} & G_{22}\\
      \end{array}
      \right):=&\left(\begin{array}{cc}
   \frac{1}{D_{E_{21}}}+\mu_1+d_1-C_{11} & -C_{12}\\
    -C_{21} & \frac{1}{D_{E_{22}}}+\mu_2+d_2-C_{22}\\
      \end{array}
      \right)^{-1}\left(\begin{array}{cc}
   \frac{1}{D_{E_{21}}} & 0\\
    0 & \frac{1}{D_{E_{12}}}\\
      \end{array}\right)\times\\
      &\left(\begin{array}{cc}
   \frac{1}{D_{E_{11}}}+\mu_1-B_{11} & -B_{12}\\
     -B_{21} & \frac{1}{D_{E_{12}}}+\mu_2-B_{22}\\
      \end{array}
      \right)^{-1}\\
      H=\left(\begin{array}{cc}
   H_{1} & 0\\
     0 & H_{2}\\
      \end{array}
      \right):=&\left(\begin{array}{cc}
   \frac{1}{D_{I_{11}}}+\mu_1 & 0\\
    0 & \frac{1}{D_{I_{12}}}+\mu_2\\
      \end{array}
      \right)^{-1}
      \end{split}
      \end{equation}
	
It then follows from (\ref{jbzsh}) that

\begin{equation}\label{tworo}
  \begin{split}
    \mathscr{R}_{0}=&\frac{1}{2}((G_{11}\beta_1\frac{S^*_1+(1-\tau_1)V^*_1}{N_1}(C_{a1}-1+\frac{H_1}
    {D_{E_{21}}}(C_{s1}-1))-G_{22}\beta_2\frac{S^*_2+(1-\tau_2)V^*_2}{N_2}\times\\
    &(C_{a2}-1+\frac{H_2}{D_{E_{22}}}(C_{s2}-1)))^2+4G_{12}G_{21}
    \beta_1\beta_2\frac{(S^*_1+(1-\tau_1)V^*_1)(S^*_2+(1-\tau_2)V^*_2)}{N_1N_2}\times\\
    &(C_{a1}-1+\frac{H_1}{D_{E_{21}}}(C_{s1}-1))(C_{a2}-1+\frac{H_2}{D_{E_{22}}}
    (C_{s2}-1)))^{\frac{1}{2}}-\frac{1}{2}G_{11}\beta_1\frac{S^*_1+(1-\tau_1)V^*_1}{N_1}\times\\
    &(C_{a1}-1+\frac{H_1}{D_{E_{21}}}(C_{s1}-1))-
    \frac{1}{2}G_{22}\beta_2\frac{S^*_2+(1-\tau_2)V^*_2}{N_2}(C_{a2}-1+\frac{H_2}{D_{E_{22}}}(C_{s2}-1)),
     \end{split}
\end{equation}
where $(S^*_1,S^*_2,V^*_1,V^*_2)$ is the disease-free solution with
\begin{equation}\nonumber
  \begin{split}
    S^*_1&=\frac{-\Lambda_1\xi_2-\Lambda_1\mu_2-A_{12}\Lambda_2+A_{22}\Lambda_1}{A_{11}\xi_2+A_{22}\xi_1+A_{11}\mu_2+A_{22}\mu_1-\xi_2\mu_1-\xi_1\mu_2-\mu_1\mu_2-\xi_1\xi_2},\\
S^*_2&=\frac{-\Lambda_2\xi_1-\Lambda_2\mu_1-A_{11}\Lambda_2+A_{21}\Lambda_1}{A_{11}\xi_2+A_{22}\xi_1+A_{11}\mu_2+A_{22}\mu_1-\xi_2\mu_1-\xi_1\mu_2-\mu_1\mu_2-\xi_1\xi_2},\\
V^*_1&=\frac{-A_{12}S_2^*\xi_2+A_{22}S^*_1\xi_1-S^*_1\xi_1\mu_2}{A_{11}\mu_2+A_{22}\mu_1-\mu_1\mu_2},\\
V^*_2&=\frac{-A_{21}S_1^*\xi_1+A_{11}S^*_2\xi_2-S^*_2\xi_2\mu_1}{A_{11}\mu_2+A_{22}\mu_1-\mu_1\mu_2}.\\
  \end{split}
\end{equation}
First, we consider the conditions for free navigations. By the dates in \cite{Bureau}, let $\Lambda_2=32954$ People/Day, $N_2=1411780000$ People, $\mu_2=0.000036$ Day$^{-1}$. We estimate the $d_2=0.00016$ Day$^{-1}$. According to the dates in \cite{efficiency}, the efficacy of Sinopharm COVID-19 Vaccine is 0.73 and Tianjin CanSino COVID-19 Vaccine is 0.66. Take the average efficacy of vaccine is 0.70. We estimate that the number of people traveled from China to India is 250000 and from India to China is 1400000 annually. Thus, $A_{21}=2.78\times 10^{-6}$ Day$^{-1}$, $B_{21}=2.78\times 10^{-6}$ Day$^{-1}$, $C_{21}=2.78\times 10^{-6}$ Day$^{-1}$, $D_{21}=2.78\times 10^{-6}$ Day$^{-1}$, $A_{12}=4.85\times 10^{-7}$ Day$^{-1}$, $B_{12}=4.85\times 10^{-7}$ Day$^{-1}$, $C_{12}=4.85\times 10^{-7}$ Day$^{-1}$ and $D_{12}=4.85\times 10^{-7}$ Day$^{-1}$. In the situation of without NPIs and vaccines, i.e., the other parameters are defined as the above and $\tau=\tau_i=\xi=\xi_i=C_{ai}=C_{si}=0$, $i=1,2$, it then follows from (\ref{tworo}) that $\mathscr {R}_0=4.0843$. By the formula of the herd immunity $1-\frac{1}{\mathscr {R}_0}<\tau\xi$, the relationships between $\tau$ and $\xi$ are shown in the following(see Figure \ref{F7}).

\begin{figure}[H]
\centering
\subfigure{
\includegraphics[height=2.78in,width=4.5in]{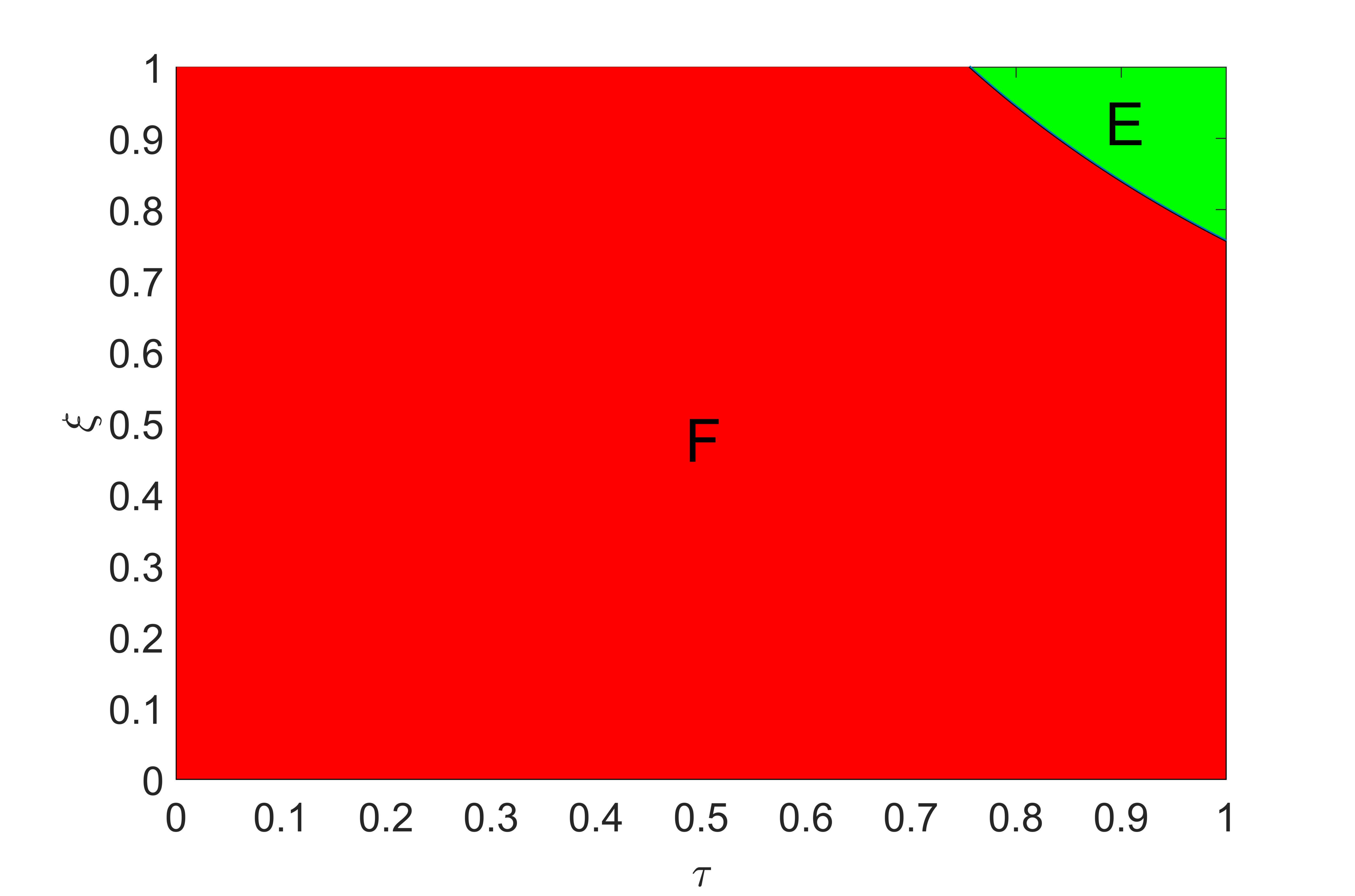}}
\caption{The relationship between herd immunity and $\xi$ and $\tau$.}
\label{F7}
\end{figure}
It is easy to see from Figure \ref{F7} that $\xi=0.755$ when $\tau=1$,  and $\tau=0.755$ if $\xi=1$. The area $E$ is the herd immunity area. The above arguments imply that it is impossible to achieve the herd immunity since the average efficacy of vaccine $\tau=0.70$. In other words, India and China do not meet the conditions for free navigations unless it is recommended to improve the efficiency of vaccinate or strengthen NPIs even if $\xi$ reaches 100\%.

In the following, we consider the influence of dispersal rate on the transmission of the disease. Let $C_{a1}=C_{s1}=0.2$, $C_{a2}=C_{s2}=0.6$, $\tau_1=0.85$, $\tau_2=0.7$, $\xi_1=0.23$, $\xi_2=0.82$ and assume $A_{21}=B_{21}=C_{21}$, $A_{12}=B_{12}=C_{12}$, the following is the relationship among $A_{21}$, $A_{12}$ and $\mathscr{R}_0$(see  Figure \ref{F8}).
\begin{figure}[H]
\centering
\subfigure{
\includegraphics[height=2.6in,width=4.5in]{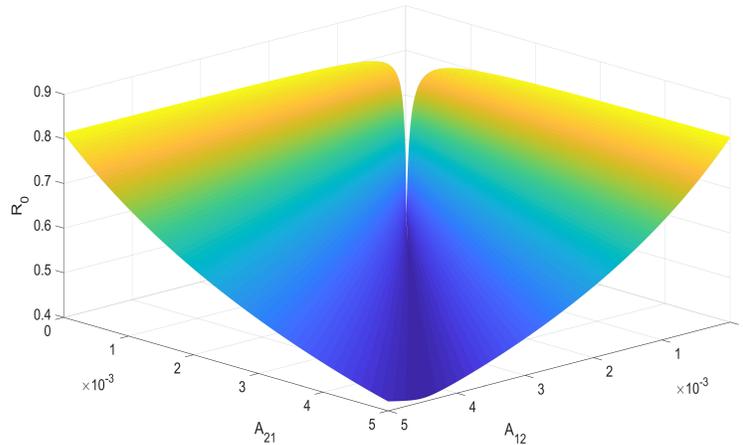}}
\caption{The relationships among $A_{21}$, $A_{12}$ and $\mathscr{R}_0$ .}
\label{F8}
\end{figure}

\begin{figure}[H]
\centering
\subfigure{
\includegraphics[height=2.4in,width=4.5in]{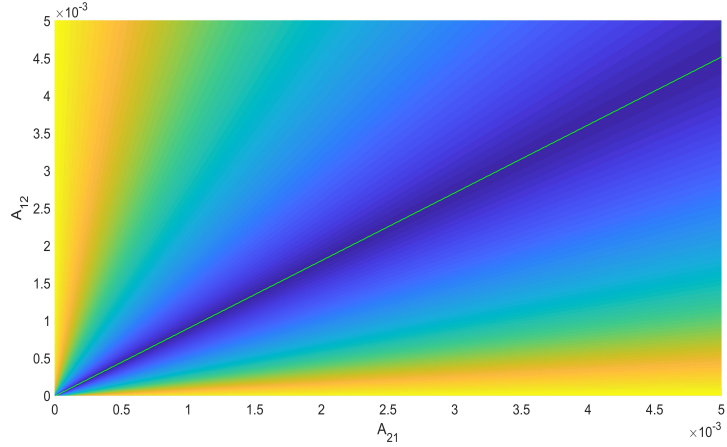}}
\caption{The projection  of Figure \ref{F8}. }
\label{F9}
\end{figure}		
Figure \ref{F9} implies that the basic reproduction number $\mathscr{R}_0$ may increase or decrease with the increase of dispersal rate. 																											 Let $A_{12}=A_{21}=0$ and the other parameters be unchanged. It then follows from (\ref{jibenzaishshu}) that $\mathscr{R}_0=0.5208$ in India and $\mathscr{R}_0=0.2831$ in China.  Let $A_{12}=4.85\times10^{-7}$ and $A_{21}=0$, then $\mathscr{R}_0=0.5247$. Let $A_{12}=0$ and $A_{21}=2.78\times10^{-6}$, then $\mathscr{R}_0=0.4870$. Let $A_{12}=4.85\times10^{-7}$ and $A_{21}=2.78\times10^{-6}$, then $\mathscr{R}_0=0.4910$. Fixed $A_{21}$, Figure \ref{F8} implies that there exists an $A_{12}$ such that $\mathscr{R}_0$ takes the minimum value. The green line in Figure \ref{F9} is the set of all such points.
Furthermore, we can  ascertain that the smallest value $\mathscr{R}_0=0.0.4035$ which is corresponded by the point in the plane of $A_{12}=2.8820*10^{-9}$ and $A_{21}=1.1636*10^{-5}$ (see Figure \ref{F9}). In other word, the appropriate dispersal of population of China to India is 1486 people, and the population of India to China is 5861071 people every year.

\section{Discussion}
\noindent

Emphasizing non-pharmaceutical interventions(NPIs) and vaccines, the dynamics of a SVEIR COVID-19 model is considered by means of the basic reproduction number. In the case of one patchy, we considered the situations in India. Our numerical result predicts that the Indian epidemic will be controlled until October if the existing intensity of NPIs and vaccine injection schedule are maintained. If the outbreak occurs repeatedly, we suggest that NPIs should be strengthened. Furthermore, it is shown that NPIs are indispensable even if all the people were vaccinated when the efficiency of vaccine is relatively low. In order to obtain the herd immunity, we speculate in numerical simulation that the minimum standard of vaccine efficiency is 76.9\%. In the face of vaccine dose shortages and logistical challenges, the herd immunity area is given. In the case of two patchyes, we conclude that India and China do not meet the conditions for free navigation under nowadays situations. In order to prevent the disease outbreaks, the appropriate dispersal implies that people in the countries or regions with serious epidemic situation should be allowed to enter into the countries or regions where the epidemic situation is mild. Certainly, the optimal strategy is that the counties affected slightly by COVID-19
supply medical supplies for the countries where COVID-19 is worst. We expect that COVID-19 will die out as soon as possible by the efforts of people of all over the world.

\end{document}